\definecolor{Gray}{gray}{0.85}
\newtheorem{lemma}{Lemma}
\def\name{DarKnight }
\DeclarePairedDelimiter\floor{\lfloor}{\rfloor}
  \providecommand\BibTeX{{%
    \normalfont B\kern-0.5em{\scshape i\kern-0.25em b}\kern-0.8em\TeX}}}
\begin{document}

\title{DarKnight: An Accelerated Framework 
           for Privacy and Integrity Preserving Deep Learning Using Trusted Hardware}


\author{Hanieh Hashemi}
\affiliation{%
  \institution{University of Southern California}
  \city{Los Angeles}
  \state{California}
  \country{US}}
\email{Hashemis@usc.edu}

\author{Yongqin Wang}
\affiliation{%
  \institution{University of Southern California}
  \city{Los Angeles}
  \state{California}
  \country{US}}
\email{Yongqin@usc.edu}

\author{Murali Annavaram}
\affiliation{%
  \institution{University of Southern California}
  \city{Los Angeles}
  \state{California}
  \country{US}}
\email{Annavara@usc.edu}

\renewcommand{\shortauthors}{Trovato and Tobin, et al.}

\begin{abstract}
Privacy and security-related concerns are growing as machine learning reaches diverse application domains. The data holders want to train or infer with private data while exploiting accelerators, such as GPUs, that are hosted in the cloud. Cloud systems are vulnerable to attackers that compromise the privacy of data and integrity of computations. Tackling such a challenge requires unifying theoretical privacy algorithms with hardware security capabilities. This paper presents DarKnight, a framework for large DNN training while protecting input privacy and computation integrity. DarKnight relies on cooperative execution between trusted execution environments (TEE) and accelerators, where the TEE provides privacy and integrity verification, while accelerators perform the bulk of the linear algebraic computation to optimize the performance. In particular, DarKnight uses a customized data encoding strategy based on matrix masking to create input obfuscation within a TEE. The obfuscated data is then offloaded to GPUs for fast linear algebraic computation. DarKnight's data obfuscation strategy provides provable data privacy and computation integrity in the cloud servers. While prior works tackle inference privacy and cannot be utilized for training, DarKnight's encoding scheme is designed to support both training and inference.  
\end{abstract}


\keywords{data privacy, trusted execution environment, Intel SGX, deep learning, data encoding, neural networks}

\maketitle

\section{Introduction}
The need for protecting data privacy in deep learning is growing rapidly in many areas such as health care~\cite{blatt2020secure}, autonomous vehicles~\cite{zhu2014system}, finance~\cite{heaton2017deep}, communication technologies~\cite{foerster2016learning}. 

oudData holders in many of these application domains, however, are not machine learning experts. They rely on machine learning as a service (MLaaS) platforms such as Microsoft Azure ML~\cite{AzureML}, Google AI platform~\cite{Google}, Amazon ML~\cite{Amazon} to fulfill their ML needs. These services incorporate ML accelerators such as GPUs for high performance and provide easy to use ML runtimes to enable data holders to quickly set up their models and train. While these platforms lower the steep learning curve, they exacerbate the users' concern regarding data privacy. Typically it is not the case that the entire cloud service provider is untrustworthy, but a subset of cloud servers may be compromised by hackers and other adversaries as has been demonstrated in recent attacks~\cite{yu2019lagrange, dong2019secure}. Hence, our work is focused on protecting against a scenario where a subset of machines may be compromised, rather than a wholly untrusted cloud provider where every machine from the cloud provider  is compromised and colluding. Our goal is to protect
data privacy and computational integrity while still enabling the use of untrusted cloud systems. 
  
There are various algorithmic approaches to protect data privacy, such as Homomorphic Encryption libraries~\cite{gentry2009fully,liu2017oblivious,gilad2016cryptonets,juvekar2018gazelle}, Secure Multi-Party Computing (MPC)~\cite{shokri2015privacy, mohassel2017secureml, wagh2019securenn, mohassel2018aby3, ispass2022wang, tang2021verifiable}, Differential Privacy~\cite{abadi2016deep, erlingsson2014rappor,team2017learning}, Noise Injection~\cite{mireshghallah2020shredder,feng2022enhancing,mireshghallah2021not}, and using Trusted Execution Enviroments~\cite{narra2019privacy, tramer2018slalom}. Each of these methods provides a different privacy guarantee and comes at different cost~\cite{mirshghallah2020privacy}, as we explain in the next section. 

This work proposes DarKnight, a framework for accelerating privacy and integrity preserving deep learning while using untrusted accelerators. DarKnight uses a unique collaborative computing model between the hardware-supported trusted execution environments (\emph{TEE}) and GPU accelerators to tackle both privacy and performance challenges. The training/inference data from a client is only made visible within the TEE which prevents its visibility to an adversary. \name uses a novel data encoding strategy based on matrix masking to obfuscate input data within the TEE before the data is allowed to leave the TEE. The obfuscated data is then offloaded to GPUs to accelerate DNN's linear computations on the encoded data. Computing solely within TEEs can provide data privacy, by blocking access to TEE memory from intruders. Even though recent enhancements to TEEs support large memory workloads~\cite{SGXMemory}, TEE-enabled CPUs still do not support massive data parallel computations that accelerators such as GPUs and TPUs support. Therefore, DarKnight distributes compute-intensive linear operations to GPUs. DarKnight's usage of TEEs is limited to protecting the privacy of data through a customized matrix masking, verifying the integrity of computations, and performing non-linear operations (ReLU, Maxpool).

Table \ref{tab:GPUCMP} shows the relative speedup for various forward and backward propagation operations in the VGG16 DNN model~\cite{simonyan2014very} executing on a GPU (Nvidia GetForce GTX 1080 Ti GPUs) relative to a baseline model executing on Intel SGX (Coffee Lake E-2174G 3.80GHz CPU). Significant speedup can be achieved by offloading linear operations to GPUs. In the rest of the paper, we describe DarKnight's approach to collaboratively execute between TEE and GPUs to achieve high performance without compromising privacy. We also describe how \name can provide computational integrity by verifying computations performed on the GPUs.  
\begin{table}[htb]
\caption{Speedup in GPU relative to SGX in VGG16 Training on ImageNet. The baseline is implemented fully on Intel SGX}
\label{tab:GPUCMP}
\centering
\resizebox{\columnwidth}{!}{%
\begin{tabular}{c cccc}
\hline
Operations & Linear Ops  & Maxpool Time & Relu Time & Total \\ \hline
Forward Pass    & 126.85  & 11.86  & 119.60 & 119.03         \\
Backward Propagation  & 149.13  & 5.47 & 6.59 &  124.56 \\
\hline
\centering
\vspace{-6mm}
\end{tabular}
}
\end{table}

This paper makes the following major contributions:
\begin{itemize}
    \item We design a data privacy-preserving framework for DNN training and inference that uses a novel matrix masking scheme to encode data within the TEE.
    \item For high performance DNN training this work offloads linear computations on encoded data to be executed on GPUs. The novelty of the encoding scheme is such that linear computations on encoded data can be decoded within the TEE to accurately extract the necessary result.  
    \item We extend our encoding strategy to guarantee privacy even in the presence of colluding GPUs.
    
    \item We provide a low overhead mechanism for verifying computation integrity.
    
    \item We implemented DarKnight using an Intel SGX-enabled CPU to perform matrix masking and non-linear DNN operations while using an Nvidia GPUs to accelerate linear operations. We observe an average of $6.5$ x performance improvement for different DNN models. 
\end{itemize}

To the best of our knowledge,  this is the first work that uses TEE-GPU collaboration for \textbf{\emph{training}} large DNNs on private data while providing computational integrity and \textbf{\emph{provable}} data privacy.

The rest of the paper is organized as follows. In Section~\ref{sec:background}, we discuss related work and background. Section~\ref{sec:model} describes the system setting and DarKnight overview. We elaborate the encoding scheme in Section~\ref{sec:training}. In Section~\ref{sec:guarantee} privacy theorem is provided. Implementation and experimental results are presented in Section~\ref{sec:aggr} and Section~\ref{sec:setup}. In Section~\ref{sec:con}, we draw the conclusion. 
\section{Background and Related work}
 \newcolumntype{L}{>{\centering\arraybackslash}m{0.017\linewidth}} 
  \newcolumntype{D}{>{\arraybackslash}m{0.32\linewidth}} 
\begin{table*}[htb]
\caption{Comparison of applications and security guarantees of various prior techniques on neural networks' security ($\circ$ means method doesn't support that feature and $\bullet$ means it supports the feature)}
\label{tab:background}
\resizebox{\textwidth}{!}{%
\begin{tabular}{lcccccccccccc}
	\hline
	\hline
	\textbf{Method} & \textbf{Training} & \textbf{Inference} & \textbf{DP} & \textbf{MPC} & \textbf{HE} & \textbf{TEE} & \textbf{Data Privacy} &  \textbf{Model Privacy(Client)}&\textbf{Model Privacy(Server)}&\textbf{Integrity}&\textbf{GPU Acceleration}&\textbf{Large DNNs}\\
    \midrule
	SecureNN~\cite{wagh2019securenn} &$\bullet$ & $\bullet$ &  $\circ$& $\bullet$  & $\circ$&$\circ$&$\bullet$&$\bullet$&$\bullet$&$\circ$&$\bullet$&$\circ$\\
	Chiron~\cite{hunt2018chiron} &$\bullet$ & $\bullet$ & $\circ$	& $\circ$ & $\circ$&$\bullet$& $\bullet$&$\bullet$&$\bullet$& $\bullet$& $\circ$&$\circ$\\
	MSP~\cite{hynes2018efficient} &$\bullet$ &$\bullet$  & $\circ$	& $\circ$ &$\circ$ &$\bullet$& $\bullet$&$\bullet$&$\bullet$ &$\bullet$&$\circ$&$\circ$ \\

	Gazelle~\cite{juvekar2018gazelle} &$\circ$ &$\bullet$  &$\circ$  & $\circ$ & $\bullet$ &$\circ$&$\bullet$&$\circ$&$\circ$&$\circ$&$\bullet$&$\bullet$\\
	MiniONN~\cite{liu2017oblivious} &$\circ$ &$\bullet$  & $\circ$ & $\bullet$ & $\bullet$ &$\circ$&$\bullet$&$\bullet$&$\circ$&$\circ$&$\bullet$&$\bullet$\\
	CryptoNets~\cite{gilad2016cryptonets} &$\circ$ &$\bullet$  & $\circ$ & $\bullet$ & $\bullet$ &$\circ$&$\bullet$&$\bullet$&$\circ$&$\circ$&$\bullet$&$\bullet$\\
	Slalom~\cite{tramer2018slalom} &$\circ$ &$\bullet$  & $\circ$ & $\circ$ &$\circ$ & $\bullet$ & $\bullet$ &$\bullet$ &$\circ$&$\bullet$&$\bullet$&$\bullet$\\
	Origami~\cite{narra2019privacy} & $\circ$&$\bullet$  & $\circ$ & $\circ$ & $\circ$& $\bullet$ & $\bullet$ &$\circ$&$\circ$&$\circ$&$\bullet$&$\bullet$ \\
	Occlumency~\cite{lee2019occlumency} & $\circ$&$\bullet$  & $\circ$ & $\circ$ & $\circ$& $\bullet$ & $\bullet$&$\bullet$&$\bullet$&$\bullet$&$\circ$&$\bullet$ \\
	Delphi~\cite{mishra2020delphi} & $\circ$&$\bullet$  & $\circ$ & $\bullet$ & $\bullet$&$\circ$ & $\bullet$&$\bullet$&$\circ$&$\circ$&$\bullet$&$\bullet$ \\
	\textbf{DarKnight} & $\bullet$&$\bullet$  &$\circ$ & $\bullet$ &$\circ$ &$\bullet$ &$\bullet$  &$\bullet$ &$\circ$&$\bullet$ &$\bullet$&$\bullet$\\
	\hline
\end{tabular}
}
\end{table*}

\label{sec:background}
\subsection{Intel SGX}
\label{SGX}
TEEs such as ARMTrustZone~\cite{alves2004trustzone}, Intel SGX~\cite{costan2016intel}, and Sanctum~\cite{costan2016sanctum} provide a hardware-assisted secure execution environment where data privacy and computational integrity of the user's application is guaranteed by the hardware. Some of the cloud providers including IBM and Microsoft Azure have already equipped their cloud systems with Intel SGX~\cite{intel}. In this work, we also utilize Intel SGX for data privacy purposes. Intel SGX provides three important features: remote attestation, local attestation, and sealing that plays an important role in our scheme. TEEs generally provide a limited amount of secure memory that is tamper-proof. SGX provides $128$ MB as the enclave memory in its current implementation. Although recent enhancements to SGX have relaxed this memory limit, the performance of SGX still suffers when enclaves use very large memory due to the Merkle-tree based encryption and versioning requirements~\cite{SGXMemory, costan2016intel}.  
Furthermore, TEEs are CPU-based. Hence, there are not as well suited for massively parallel matrix operations that are common in DNNs. Although there are different methods proposed to extend the TEEs beyond CPUs, none of them are available in the marketplace yet~\cite{xu2020bus, volos2018graviton}. While some types of side-channel attacks have been performed on SGX~\cite{wang2017leaky,gotzfried2017cache,oleksenko2018varys, brasser2017software}, many of these attacks are being fixed actively~\cite{xu2015controlled}. 
As such SGX side channel leakage is outside the scope of this work. 

\subsection{Related Work}
There are a variety of approaches for protecting input and model privacy and computation integrity during DNN training and inference. These methods provide different privacy guarantees~\cite{mirshghallah2020privacy}. In this section, we briefly explain the most common methods. \textit{Fully Homomorphic encryption (FHE)} techniques encrypt input data and then perform operations directly on encrypted data. They usually provide a high theoretical privacy guarantee on data leakage, albeit with a significant performance penalty, and hence are rarely used in training DNNs. 
\textit{Secure multi-party computing (MPC)} is another approach, where multiple servers may use custom data exchange protocols to protect input data. 
They mostly use secret sharing schemes and have super-linear overhead for communication as the number of sharers and colluding entities grows. 
An entirely orthogonal approach is to use \textit{differential privacy (DP)}, which protects individual users' information through probabilistic guarantees by inserting noise signals to different parts of the computation. The tradeoff between utility and privacy is a challenge in this line of work. 
Protections to ML system sparse features are also discussed in \cite{laoram, pancakge2020}, though protections to sparse feature are not the focus of this paper.
Recently, TEEs attracted attention in ML domain for their privacy and integrity properties~\cite{asvadishirehjini2020goat,mo2020darknetz,park2020nested,hashemi2021byzantine, mo2021ppfl, ng2021goten,prakash2020mitigating}. Among TEE-based approaches,~\cite{tramer2018slalom} introduced Slalom an \emph{inference} framework that uses TEE-GPU collaboration to protect data privacy and integrity. However, as stated in their work their model was not designed for training DNNs. We elaborate on these reasons in Section~\ref{sec:litreture}. In another line of research, custom hardware accelerators are designed for security~\cite{volos2018graviton,hua2020guardnn}. These accelerator-based approaches use DNN-specific optimizations to reduce the protected memory footprint to improve efficiency. \textit{Instahide}~\cite{huang2020instahide} combines multiple images from a private dataset, and merges them with a public image set, and uses a sign flip function on pixels as random noise parameters. This method processes the encoded data without any decoding. However, privacy guarantees are purely empirical. That is why in~\cite{carlini2020attack} authors designed an attack to break this empirical guarantee. DarKnight provides a strong cryptography guarantee for privacy protection instead of relying just on empirical quantification which may be compromised. In Table~\ref{tab:background}, we compare some of these approaches based on their privacy and integrity guarantees, and their applications.

\section{DarKnight}
\label{sec:model}
\begin{figure}[htbp]
 \centering
 \includegraphics[scale = 0.2]{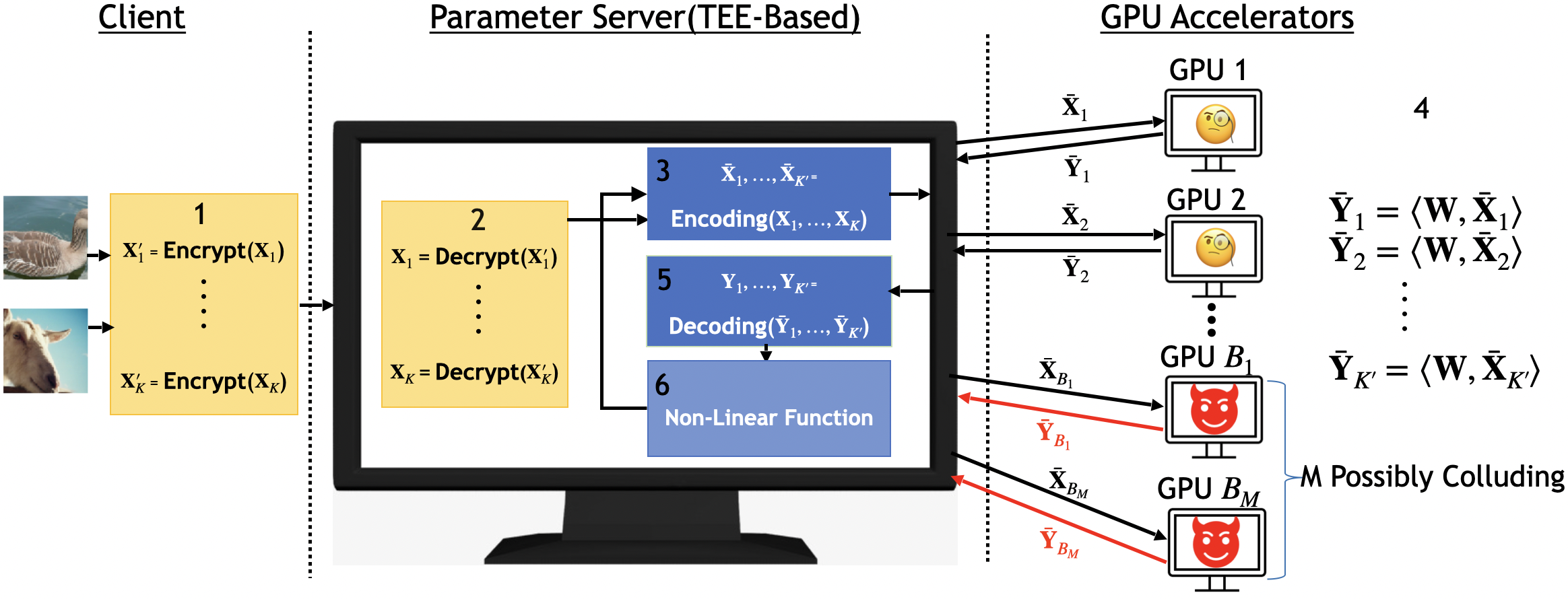}
5\vspace{-1mm}
 \caption{Overview of DarKnight components, its communication and, computation}
 \label{fig:model1}
\vspace{-5mm}
\end{figure}
\textbf{System Structure}: Our system model for learning is shown in Figure~\ref{fig:model1}. The client is the data owner who wishes to train a model using the cloud service. The cloud server is equipped with TEE which is responsible for protecting data privacy through encoding and performing non-linear operations. The TEE hardware guarantees code verification and authentication, where a data holder can verify the code and data usage within the enclave code. The system model uses $K'$ GPU accelerators ($\text{GPU}_1,\dots, \text{GPU}_{K'},{K'>1}$) that participate in linear computations on data that is encoded in the TEE. In our implementation, we use Intel SGX as a TEE. Communication channels between the client, server, and GPUs are encrypted. For example, all the client data is first encrypted before being sent to the TEE. A pairwise secure channel between TEE and each GPU can be established using a secret key exchange protocol at the beginning of the session. 
While GPUs perform most of the training computations, input privacy on GPUs is guaranteed by our proposed encoding scheme (described next), which obfuscates the original input. 

\textbf{Threat Model:} While adversaries can perform various attacks to exfiltrate DNN model parameters~\cite{riazi2019deep}, DarKnight focuses on attacks that expose the datasets used in training or inference and attacks that modify computational results on untrusted hardware.  We assume that data stored in TEE is protected from an adversary.  Side-channel attacks are out of the scope of this work.
The threat model on the cloud is a dynamic malicious adversary. This means at any given time,  accelerator GPUs may try to glean private information from the data shared with them by the TEE. To provide perfect privacy it is the responsibility of the TEE to provide the encoded data that is drawn from a uniformly random distribution which is decoupled from the actual raw input data (More details in Section~\ref{sec:guarantee}).
Moreover, a subset of GPUs may try to extract information by collaborating with each other. We refer to them as \emph{colluding GPUs}. Since the GPUs can be malicious, they may also inject faults in the computation to sabotage training or inference. Thus, computational integrity will be explicitly verified by \name in the presence of such an adversarial capability assumption.

Note that we assume only a subset of all the available GPUs collude. Such an assumption is not unique to this work as most MPC approaches assume only a subset of parties collude. Furthermore, this assumption is also true in scenarios where a client may request a subset of GPUs across different geographic regions, as is possible with current cloud services~\cite{rani2015ontology}. Clients may also request for nodes across multiple cloud providers to prevent collusion across the entire network. Nonetheless, it is useful to note that  DarKnight does not provide provable privacy guarantees against a scenario when all the GPUs participating in distributed training collude.   

In a system with $K'$ accelerator GPUs, DarKnight simultaneously provides:

\textbf{Data Privacy:} \name provides perfect privacy by making encoded data and raw inputs completely independent. Perfect privacy achieves when the data seen by any GPU does not give any information about the original input client data. More formally, perfect privacy is when the mutual information between encoded data and raw data is equal to zero. Namely, $I(\bar{X}_{k}:X_{k})=0$, where I is the mutual information~\cite{cover1999elements, yu2019lagrange}. 

\textbf{Integrity:} \name is (K'-1)-secure, namely it can \textit{detect} any malicious computation even if K'-1 GPUs send erroneous results to TEE.

\textbf{Collusion tolerance:} \name provide perfect privacy \textbf{and} integrity verification when $M$ GPUs collude, where $M$ is a function of K' (total GPUs in the system) and the number of inputs that can be encoded, as described at the end of this section~\ref{sec:training}.

\textbf{Model Privacy:}  \name does not reveal anything about the model to the client. However, model privacy on the server-side is out of the scope of this work. One common defense is using central differential privacy to keep the model private. Central differential privacy can be used on top of DarKnight~\cite{erlingsson2019amplification}.

\subsection{DarKnight Flow}
The initial machine learning model ($\mathbf{W}$) that a user wants to train is loaded into the cloud server and is made accessible to the untrusted GPUs as well. DarKnight then uses the following steps: 

\begin{enumerate}
    \item A batch of training/inference input data set is encrypted by the client and sent to the TEE enclave on the server. 
    
    \item TEE initiates encoding on decrypted inputs. 
    
    \item During the forward/backward pass of training, each layer requires linear and nonlinear operations. The linear operations are compute-intensive and will be offloaded to GPUs.  DarKnight's encoding mechanism is used to \textit{seal} the data before sending the data to GPU accelerators. To seal the data, DarKnight uses the notion of a \textit{virtual batch}, where $K$ inputs and a random noise are linearly combined to form $K+M$ coded inputs, as described in the next section; $M$ is the collusion tolerance as described above. The size of the virtual batch is limited by the size of the TEE memory that is necessary to encode $K$ images, typically 4-8 images at a time. This virtual batch size may be different from the traditional batch size used in machine learning. 

    \item The encoded data is offloaded to GPUs for linear operation. Each GPU receives only one encoded data inputs.  

    \item GPUs perform linear operations on different encoded data sets and return the results to TEE. 

    \item The TEE decodes the received computational outputs using DarKnight's decoding strategy and then performs any non-linear operations within the TEE.

    \item This process is repeated both for forward pass and backward propagation of each layer.  
\end{enumerate}

\section{DarKnight: Privacy in Training}
\label{sec:training}
In this section, we provide an overview of the forward and backward propagation of deep neural network training and the novel encoding and decoding process that we designed for privacy and integrity verification.    

We first start with the encoding/decoding in forward pass which is the first phase in training and then explain how backward propagation works. Please note that forward pass and inference are similar in terms of encoding and decoding functions and hence the forward propagation strategy can be applied directly to inference only systems.

In the rest of this section, we assume that we have a DNN with $L$ layers which is being trained with a virtual batch of $K$ inputs, the model parameters $\mathbf{W}_{l}$ at layer $l$ are updated using the well known SGD process.

Also for simplicity, we first show how this mechanism works for a system in which GPUs are not colluding, and next we expand the encoding to support a system with $M$ colluding GPUs in section~\ref{sec:colluding}.

\subsection{Forward Pass}
 At a layer $l$ of the forward pass, we need to  compute $\mathbf y_l=\langle \mathbf W_l~,~\mathbf x_l\rangle$, where $\mathbf W_l$ and $\mathbf x_l$ represent the model parameters and inputs in layer $l$, and $\langle \cdot,\cdot\rangle$ corresponds to the bilinear operation at that layer (e.g. matrix product, convolution, etc.).  After the linear operation finishes, an activation function ($g(\cdot)$) creates the next layer input $\mathbf x_{l+1}=\text{g}(\mathbf y_l)$.  Within this context, DarKnight first receives a set of $K$ inputs $\mathbf x_0^{(1)},\dots,\mathbf x_0^{(K)}$ for a batch training from a client. Our goal is to perform linear calculations of $\mathbf y_0^{(1)}= \langle\mathbf W_0 , \mathbf x_0^{(1)}\rangle,\dots,\mathbf y_0^{(K)}=\langle\mathbf W_0, \mathbf x_0^{(K)}\rangle$ on the GPUs without exposing the raw inputs to the GPUs. Note that the subscript $0$ in all these variables refers to the first layer. At this point, we drop the subscript for a more clear notation. Also, we use the notation $\mathbf {\color{red}x}$ for the inputs that need to be protected and $\mathbf{\color{blue}\bar{x}}$ for the encoded inputs. DarKnight must protect ${\mathbf{\color{red} x^{(i)}_{l}}}$ for each layer of the DNN when the layer's operations are outsourced to GPUs. 
 
 \textbf{Key Insight:} The main idea behind DarKnight's privacy protection scheme is the fact that the most computationally intensive operator (such as convolutions) is \emph{bilinear}. Thus, instead of asking a GPU to calculate $\langle \mathbf W,\mathbf {\color{red} x^{(i)}}\rangle$, which exposes the inputs, DarKnight uses matrix masking to linearly combine the inputs and add a random noise to them. Due to the bilinear property, any linear operation on $K$ masked inputs can be recovered if there are $K$ different linear computations performed.
 
\textbf{Matrix Masking}:
Introduced by~\cite{cox1980suppression, cox1994matrix}, matrix masking scheme can be used for a variety of reasons such as noise addition, sampling, etc.
Matrix masking uses the general form  $\mathbf B \mathbf X \mathbf A + \mathbf C$ for protecting Matrix $\mathbf X$. In the above formula $\mathbf B, \mathbf A$, and $\mathbf C$ are called record transformation masks, attribute transformation masks, and displacing masks, respectively. Any of these additive and multiplicative masks can be used for encoding data based on the data privacy goal. Prior approaches have used different combinations of masks to protect matrix data~\cite{kim1986method, spruill1983confidentiality, yu2019lagrange}.   
Darknight's encoding scheme is a form of matrix masking that is optimized for DNN's linear operations. 
\begin{figure}[tbp]
 \centering
 \includegraphics[width=\linewidth]{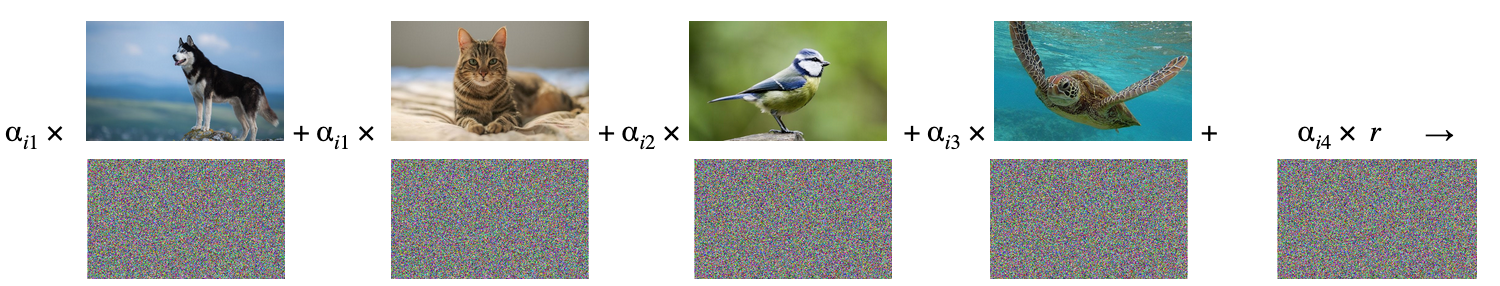}
 \caption{An example of DarKnight Encoding Scheme for K=4 }
 \vspace{-6mm}
 \label{fig:encoding}
\end{figure}

\textbf{DarKnight Encoding}: The SGX based enclave first receives a set of inputs (such as a set of images) from a data holder. Then the DarKnight scheme creates $K+1$ encoding within the SGX from $K$ data inputs (${\color{red}{\mathbf x}^{(1)}},\dots,{\color{red}{\mathbf x}^{(K)}}$), as follows,
\begin{align}\label{eq:inference_blinding}
{\color{blue}\bar{\mathbf x}^{(i)}}\quad= \quad \alpha_{1,i} {\color{red}\mathbf{x}^{(1)}} + \dots+ \alpha_{K,i} {\color{red}\mathbf{x}^{(K)}}  +\alpha_{(K+1),i} \mathbf{r}~
\end{align}
Where $i=1,\dots,(K+1)$. The scalars $\alpha_{i,j}$, and the noise vector $\mathbf r$ are randomly generated; and the size of $\mathbf r$ matches that of ${\color{red}\mathbf x}$.
The scalars $\alpha_{i,j}$'s are represented by matrix $\mathbf A \in \mathbb R^{(K+1),(K+1)}$, which are dynamically generated for each virtual batch and securely stored inside SGX for decoding. As we prove in a later section, by revealing the values ${\color{blue}\bar{\mathbf x}^{(i)}}$s to GPUs, we do not disclose any information about the inputs ${\color{red}\mathbf x^{(i)}}$s. An example of encoding is shown in Figure~\ref{fig:encoding}.

At the next step, each encoded data ${\color{blue}\bar{\mathbf x}^{(i)}}$ is sent to a $GPU_i$ which performs the following computation:
\begin{align*}
{\color{blue}\bar{\mathbf y}^{(i)}} =\langle \mathbf W , {\color{blue}\bar{\mathbf x}^{(i)}}\rangle, \quad i=1,\dots,(K+1).
\end{align*}
Please note that each GPU receives \emph{at most one} encoded data. 

\textbf{DarKnight Decoding}: The $K+1$ outputs ${\color{blue}\bar{\mathbf y}^{(i)}}$ returned from the GPUs must be decoded within the SGX to extract the original results ${\color{red}\mathbf y^{(i)}}$. These values can be extracted as follows:

\begin{align}
    {\color{blue}\bar{\mathbf Y}}=\left\langle \mathbf W, [{\color{blue}\bar{\mathbf x}^{(1)}},\dots,{\color{blue}\bar{\mathbf x}^{(K+1)}}] \right\rangle =
\end{align}
\begin{align*}
    \underbrace{\left\langle \mathbf W, [{\color{red}{\mathbf x}^{(1)}},\dots,{\color{red}\mathbf x^{(K)}},\mathbf r] \right\rangle}_{{\color{red}\mathbf Y}} ~\cdot \mathbf A~\Rightarrow~ {{\color{red}\mathbf Y}}={\color{blue}\bar{\mathbf Y}}\cdot \mathbf A^{-1}~
\end{align*}
\subsection{Backward Propagation}

The goal of training is to find the weight updates for each layer. For computing weight updates using SGD, we need to compute: 
\begin{equation}
\mathbf{W}^{\text{new}}_{l} = \mathbf{W}^{\text{old}}_{l} - \eta\times \triangledown \mathbf{W}_{l},\quad
\triangledown \mathbf{W}_{l}=\frac 1 K \sum_{i=1}^K ~ \langle \delta^{(i)}_{l} , {\mathbf x^{(i)}_{l}}\rangle
\label{eq:sgd}
\end{equation}
Here $x_l^{(i)}$ is the $i^{\text{th}}$ input of layer $l$. $\eta$ is the learning rate, and $\delta^{(i)}_{l}$ is the gradient of the loss for the $i^{\text{th}}$ point in the training batch, with respect to the output of layer $l$. 

There are two sets of computation intensive linear operations in each layer for computing backward propagation. \\
(1) The first linear operation is to compute the derivative of loss with respect to weights which is computed as follows: 
\begin{align*}
\triangledown \mathbf{W}_{l} =\frac{\partial\mathcal L}{\partial\mathbf W_l} = \frac {1}{K} \sum_{i=1}^K \langle \delta^{(i)}_l,  {\color{red}\mathbf{x}^{(i)}_l} \rangle 
\end{align*}

In this equation $\mathbf{x}^{(i)}_l$ needs to be protected for data privacy. Hence, when we want to offload these computations to GPUs, encoded data should be used instead of raw data. \\
(2) In the above equation $\delta^{(i)}_{l}$ represents the gradient of layer l for input data point $i$, which is computed as a second linear operation that computes the derivative of loss with respect to the output of the layer.   $\delta^{(i)}_{l}$ can be computed as follows, where $g_{l}$ is the activation function of layer $l$.

\begin{equation*}
\delta^{(i)}_{l} = \frac{\partial \ell(\mathbf{y}^{(i)}_{L}, \mathbf{y}^{(i)}_{exp})}{\partial \mathbf{y}_{l} } = \langle \delta^{(i)}_{l+1}~,~ {g^\prime_{l}}^{(i)} \rangle
\end{equation*}

As shown this computation does not contain any sensitive $\mathbf x^{(i)}$ information. Therefore, it can be offloaded to the GPUs without any data encoding.

In the rest of this section we explain how data encoding works for computing $\triangledown \mathbf W$. 

\textbf{Key Insight:} As explained in the previous section, the encoding process for forward pass exploited the invariant property of model parameter for any given input such  that $\left\langle \mathbf W, [{\color{blue}\bar{\mathbf x}^{(1)}},\dots,{\color{blue}\bar{\mathbf x}^{(k+1)}}] \right\rangle = \left\langle \mathbf W, [{\color{red}{\mathbf x}^{(1)}},\dots,{\color{red}\mathbf x^{(k)}},\mathbf r] \right\rangle ~\cdot \mathbf A~$, meaning that a single $\mathbf{W}$ was shared between all the inputs of that layers. However, during the backward propagation process, we a have different $\delta_l^{(i)}$ for each input $\mathbf {\color{red}x_l^{(i)}}$. Thus, \name designed a novel decoding scheme to extract the $\langle \delta^{(i)}_{l}, \mathbf {\color{red}x^{(i)}_{l}}\rangle$ from obfuscated inputs $\langle \delta^{(i)}_{l} , {\color{blue}\bar{\mathbf x}^{(i)}_{l}}\rangle$.

The decoding scheme is based on a key insight. While backward propagation operates on a batch of inputs, it is not necessary to compute the $\langle \delta^{(i)}_{l}, {\color{red}\mathbf x^{(i)}_{l}}\rangle$ for each input ${\color{red}\mathbf x^{(i)}}$. Instead, the training process only needs to compute \textit{cumulative parameter updates} for the entire batch of inputs. Hence, what is necessary to compute is the entire $\triangledown \mathbf{W}_{l}$ which is an average over all updates corresponding to inputs in the batch. 

\textbf{DarKnight Encoding:} DarKnight exploits this insight to protect privacy without significantly increasing the encoding and decoding complexity of the blinding process. 
As shown in Equation~\eqref{eq:sgd}, there are $K$ inputs on which gradients are computed. DarKnight calculates the overall weight update in the backward propagation by summing up the following $K+1$ equations each of which are computed on a different GPU,
\begin{align}\label{eq:gamma_lin}
\triangledown \mathbf{W} = \frac{1} {K} \sum_{j=1}^{K+1}  \gamma_{j} \text{Eq}_{j}, \qquad \text{Eq}_{j} = \left\langle \sum_{i=1}^K \beta_{j,i}~ \mathbf \delta^{(i)}~,{\color{blue}\bar{\mathbf x}^{(j)}} \right\rangle~
\end{align}
In the above equations, the encoded input ${\color{blue}\bar{\mathbf x}^{(j)}}$ to a layer is the same that was already calculated during the forward pass using Equation~\eqref{eq:inference_blinding}. As a result, the TEE can simply reuse the forward pass encoding without having to re-compute. 
As shown $\delta^{(i)}$s are multiplied with the $\beta_{j,i}$ in the GPUs after which the GPUs compute the bi-linear operation to compute $\text{Eq}_{j}$. 

In the above computation DarKnight selects $\alpha_{i,j}$'s, $\beta_{j,i}$'s, and $\gamma_i$'s such that they follow a very specific mathematical property as shown in Equation~\ref{eq:matrix_relation1}. If the values are selected to follow this property,  the overall parameter updates $\triangledown \mathbf{W}$ can be decoded very efficiently by scaling each $GPU_i$'s result with a corresponding $\gamma_i$'s and summing up all the $K+1$ outputs.

\begin{equation}
    \mathbf B^T\cdot \boldsymbol \Gamma\cdot \mathbf A^T = \begin{bmatrix}1 & 0 & \dots & 0 & 0
  \\0 & 1 & 0 & \dots & 0
  \\\vdots & \ddots& \ddots & \ddots & \vdots
\\ 0 & \dots & 0 & 1 & 0\end{bmatrix}_{K \times (K+1)}
\label{eq:matrix_relation1}
\end{equation}
Assuming batch size is equal to $K$, the $\beta_{j,i}$ parameters used for scaling $\delta$ values is gathered in the $K+1$ by $K$ matrix, $\mathbf B$. $\alpha_{i,j}$'s are gathered in the $K+1$ by $K+1$ matrix $\mathbf A$, and $\gamma_i$'s form the diagonal of a $K+1$ by $K+1$ matrix $\Gamma$, that gives us the proper parameters for efficient decoding. Note that the TEE keeps matrix $\Gamma$ and $\mathbf A$ as secret within the enclave memory, while providing $\mathbf B$ to the GPUs so each GPU can compute $\text{Eq}_{j}$.

\textbf{DarKnight Decoding:} Given the constraint imposed on $\alpha_{j,i}$'s, $\beta_{j,i}$'s and $\gamma_i$'s the decoding process is trivially simple to extract $\triangledown \mathbf{W}$. It is easy to see that if the scalars $\alpha_{i,j}$'s, $\beta_{i,j}$'s and $\gamma_i$'s satisfy the relation~\eqref{eq:matrix_relation1}, we will have
\begin{align}
    \frac 1 K\sum_{j=1}^{K+1}  \gamma_{j} ~ \text{Eq}_{j}=\frac 1 K \sum_{i=1}^{K} ~ \langle \delta^{(i)}_{l} , {\color{red}\mathbf x^{(i)}_{l}}\rangle=\triangledown \mathbf{W}_l
    \label{eq:agg}
\end{align} 
In other words, the decoding process only involves calculating a linear combination of the values in Equation~\eqref{eq:gamma_lin}. In particular, the TEE enclave receives $\text{Eq}_{j}$ from each $GPU_i$ and multiplies that value with the secretly held $\gamma_i$. It then computes the average over all the $\text{Eq}_{j}$ values received to compute the overall weight update for the batch.  

In Equation~\eqref{eq:gamma_lin}, the computation of $\beta_{j,i}~ \mathbf \delta^{(i)}$ has no privacy implication. We use this encoding only to simplify the secure aggregation mechanism to compute the weight update $\triangledown \mathbf{W}$. Thus we do not need to protect matrix $\mathbf B$ in the enclave. 

Note that even though the above equations are computed over $K$ inputs in a virtual batch, it is possible for the SGX enclave to securely store multiple $\triangledown \mathbf{W}_l$ associated with multiple virtual batches that comprise the training batch. Once all the inputs in the training batch are processed the SGX enclave can do a single aggregation to generate a batch-wide weight update. Details are provided in Section~\ref{sec:aggr}.

\textbf{DarKnight Training Complexity:}  The size of the $\alpha$,  $\delta$, and  $\gamma$ matrices is just proportional to the square of the virtual batch size that is being processed at one time. Therefore, generating them for every batch has a negligible performance overhead. Even with 8-64 batch size, (commonly used in VGG training~\cite{canziani2016analysis, han2015deep, narra2019slack}) these scaling values are substantially smaller than the model parameters $\mathbf{W}$. Hence, the order complexity of encoding/decoding operations is much less than the linear operations ($\left\langle \mathbf W, \mathbf x \right\rangle$) in a DNN with millions of parameters. In addition to that, the process of decoding $K$ inputs with one random noise requires $K+1$ computations. During decoding, we extract $\mathbf W \cdot \mathbf r$, but that value is just dropped. Thus, DarKnight trades only $\frac 1 K$ additional computations to provide provably perfect privacy.
\subsection{Decoding Correctness Proof}
\label{sec:dec}
In this section we prove that the above described decoding approach correctly produces the aggregate weight update within the TEE when using Equation~\ref{eq:agg}. Recall that \name generates $\bar{\mathbf{X}}=\mathbf{X} \mathbf{A}$ within the  TEE. The TEE also generates the random matrix ${\mathbf{B}}$, $\boldsymbol \Gamma$ to satisfy Equation~\ref{eq:matrix_relation1}. The $\bar{\mathbf{X}}$ along with the ${\mathbf{B}}$ are  sent to the GPUs. The GPUs generate $\bar{\mathbf{\delta}} = \mathbf{\delta} \mathbf{B}^T$ and then compute $\bar{\mathbf Y}= \bar{\mathbf \delta}^T \bar{\mathbf{X}}$. 

\begin{align}
    &{\mathbf X}=\left[{\mathbf x}^1,\dots,{\mathbf x}^K, \mathbf r\right]\in\mathbb R^{N\times K+1}~,\nonumber\\
    &\bar{\mathbf X}=\left[\bar{\mathbf x}^1,\dots,\bar{\mathbf x}^{K+1}\right]\in\mathbb R^{N\times K+1}~,\nonumber\\
    &{\mathbf \delta}=\left[{\mathbf \delta}^1,\dots,{\mathbf \delta}^K\right]\in\mathbb R^{N\times K}~,\nonumber\\
    &\bar {\mathbf \delta}=\left[\bar{\mathbf \delta}^1,\dots,\bar{\mathbf \delta}^K\right]\in\mathbb R^{N\times (K+1)}~,\nonumber\\
    & {\boldsymbol \Gamma}=\text{Diag}\left({\mathbf \gamma}^1,\dots,{\mathbf \gamma}^{K+1}\right)\in\mathbb R^{(K+1)\times (K+1)}~,\nonumber\\
    &\text{and ,}~~\mathbf A\in\mathbb R^{K+1 \times K+1}~,~ \mathbf B\in\mathbb R^{K+1 \times K}.
\end{align}

The goal is to compute $\sum_{i=1}^K \delta^{(i)} \mathbf{x^{(i)}}$ by using Equation~\eqref{eq:agg}. We make use of the following algebraic equations,
\begin{align}
&Tr[\mathbf X] = \sum_{i=1}^K {X_{ii}}\nonumber ~\quad \text{{definition of Matrix Trace}}\\
&Tr[\mathbf{X} \mathbf{Y}] = Tr[\mathbf{Y} \mathbf{X}]\nonumber\\
&(\mathbf{A} \mathbf{B})^T = \mathbf{B}^T \mathbf{A}^T\nonumber\\
&(\mathbf{A}^T)^T = \mathbf{A}
\end{align}
Now, as long as the equation~\ref{eq:matrix_relation1} holds
We will have
\begin{align}
&\sum_{j=1}^{K + 1} \gamma_j \text{Eq}_j = \mathrm{Tr}(\boldsymbol \Gamma \bar{\delta}^T \bar{\mathbf{X}}) = \mathrm{Tr}(\boldsymbol\Gamma \mathbf{B} \mathbf{\delta}^T \mathbf{X} \mathbf{A}) = \nonumber\\& \mathrm{Tr}(\mathbf{A} \boldsymbol{\Gamma}\mathbf{B} \mathbf{\delta}^T \mathbf{X}) = \mathrm{Tr}((\mathbf{B}^T \boldsymbol{\Gamma} \mathbf{A}^T)^T . (\mathbf{\delta}^T \mathbf{X}))=\sum_{i=1}^K \langle \delta_l^{(i)},\mathbf x_l^{(i)}\rangle~,
\end{align}
which concludes the proof.

\subsection{Computational Integrity}
\label{sec:integrity}
Apart from protecting data privacy, DarKnight's encoding method can be extended to detect computational integrity violations by untrusted GPUs. In this case, the linear computations performed by GPUs must also be verified.  
Recall that DarKnight  creates $K+1$ encoded inputs ${\color{blue}\bar{\mathbf x}^{(1)}},\dots,{\color{blue}\bar{\mathbf x}^{(K+1)}}$ for $K$ original inputs. To provide integrity, DarKnight creates one additional linear combination of inputs (say ${\color{blue}\bar{\mathbf x}^{(K+2)}}$), using the same approach as in Equation~\eqref{eq:inference_blinding}. 
This leads to having $K+2$ linear equations for recovering $K+1$ unknowns. This additional equation allows us to verify the accuracy of each result ${\color{red}{\mathbf y}^{(i)}}$ by computing it redundantly at least twice using at least two sets of equations. If the results computed are not consistent across the two, one of the GPU cores may not function properly or their data is modified by an attacker. Once an integrity violation is detected, TEE may perform additional corrective action, such as executing on another GPU worker or perform additional redundant computations. But these actions are outside the scope of our current work.

\subsection{Colluding GPUs}
\label{sec:colluding}
In this section, we investigate the scenario in which multiple GPUs can collaborate to extract information from the encoded data. With $K'$ GPUs and virtual batch size of $K$, we can tolerate $M < K'-K$ colluding GPUs without compromising privacy. We show how we can securely outsource calculating $\langle\mathbf W,\mathbf x^{(i)}\rangle$, $i=1,\dots, K$, to the GPUs. We first create $S=M+K$ encoded data vectors, $\bar{\mathbf x}^i$, $i=1,\dots, S$, using $M$ noise vectors $\mathbf r^1,\dots,\mathbf r^M$, as follows.
\begin{align}\label{eq:encode_X}
    &\bar{\mathbf X}=\mathbf X\mathbf A_1+\mathbf R\mathbf A_2~,\quad\text{where ,}\nonumber\\
    &\bar{\mathbf X}=\left[\bar{\mathbf x}^1,\dots,\bar{\mathbf x}^{S}\right]\in\mathbb R^{N\times S}~,\nonumber\\
    &{\mathbf X}=\left[{\mathbf x}^1,\dots,{\mathbf x}^K\right]\in\mathbb R^{N\times K}~,\nonumber\\
    &{\mathbf R}=\left[{\mathbf r}^1,\dots,{\mathbf r}^M\right]\in\mathbb R^{N\times M}~,\nonumber\\
    &\text{and ,}~~\mathbf A_1\in\mathbb R^{K\times S}~,~~\mathbf A_2\in\mathbb R^{M\times S}~.
\end{align}
Here, the matrices $\mathbf A_1$ and $\mathbf A_2$ are the encoding coefficients similar to the initial scheme we used for DarKnight. 
Same as before, we can calculate the weight updates using the following equations:
\begin{equation}\label{eq:gamma_lin_colluding1}
\triangledown \mathbf{W} = \sum_{j=1}^{S}  \gamma_{j} \text{Eq}_{j}, \quad \text{Eq}_{j} = \left\langle \sum_{i=1}^{K} \beta_{j,i}~ \mathbf \delta^{(i)}~,{\color{blue}\bar{\mathbf x}^{(j)}} \right\rangle
\end{equation}
We now define
\begin{align}
    \mathbf A=\begin{bmatrix}
    \mathbf A_1\\
    \mathbf A_2
    \end{bmatrix}~,~\mathbf B=\begin{bmatrix}
    \beta_{j,i}
    \end{bmatrix}~,\quad \Gamma=\text{Diag}(\gamma_1,\dots,\gamma_S)
\end{align}
Now, it is easy to show that we can recover $\triangledown \mathbf W$ using Equation~\eqref{eq:gamma_lin_colluding1} if:
\begin{equation}
    \mathbf B^\intercal\cdot \boldsymbol \Gamma\cdot \mathbf A = \begin{bmatrix}1 & 0 & \dots & 0 & 0 & \dots & 0
  \\0 & 1 & 0 & \dots & 0 & \dots & 0
  \\\vdots & \ddots& \ddots & \ddots & \vdots & \ddots  
\\ 0 & \dots & 0 & 1 & 0 &\dots & 0\end{bmatrix}_{K \times S}
\label{eq:matrix_relation2}
\end{equation}

As explained with this data encoding scheme, for a virtual batch of size $K$, and in a system with $M$ colluding GPUs, we need $K+M$ GPUs for data privacy reasons. Also, we need one extra GPU for integrity checks which performs redundant computation to verify the results. 

In summary:
\emph{In a system with $K'$ GPUs and virtual batch size $K$, DarKnight can provide data privacy and computational integrity by tolerating up to $M$ colluding malicious GPUs, where $K+M+1 \leq K'$.}

\section{Privacy Guarantee}
\label{sec:guarantee}
In this section, we explain how we provide data privacy to DarKnight. As we explained in the previous section, for each virtual batch, each GPU receives only one encoded data for each layer. To provide data privacy, encoded data should be completely independent of the raw data. If we make every encoded data, $\bar{\mathbf x}_i~\quad \forall i \in [1,...,S]$, appear uniformly random in the Finite Field $\mathbb F_p$ (p is a large prime) that is representing it, the adversary cannot infer any information about raw data. This is due to lemma \ref{lem:privacy_g}, since the values $\bar x_i$s are simply a linear combination of the variables $x_i$'s plus uniform random variables over the operating field. More concretely, $I(\bar{\mathbf X}_{M}:\mathbf X_{K})=0$ and knowing $\bar x_i$ leaks no information about $x_i$ . If every encoded data each GPU receives appears uniformly random, it will be impossible for a GPU to distinguish between different raw input data by just observing the encoded data~(it's the formal definition of computational indistinguishably in~\cite{goldreich2007foundations}). In the other words to achieve this goal, we make sure every encoded data exposed to GPUs looks uniformly random over Finite Field $\mathbb F_p$ for a large prime $p$. Since each GPU observes only one encoded data from a virtual batch, this privacy guarantee is one of the strongest privacy guarantees in the field of cryptography which is known as the one-time pad and used in many of the prior works~\cite{tramer2018slalom, yu2019lagrange,deng2004secure}. For this, first, DarKnight quantizes all the inputs and the model weights and maps them to the field $\mathbb F_p$. 
In this work, we choose $l = 8$ and $p=2^{25}-39$ for ResNet, VGG and MobileNet models which is the largest prime with $25$ bits. 

\begin{lemma}\label{lem:privacy_g}
Assume that $x\in\mathbb F_P$ is a scalar in the field $\mathbb F_p$, and $z$ is a random variable, uniformly chosen over the field $\mathbb F_p$, and let $y=x+z$. Then
\begin{align}
     I(y ; x) = 0 ~.
\end{align}
In the other words, knowing $y$ leaks no information about $x$.
\end{lemma}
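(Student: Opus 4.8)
The plan is to establish the classical one-time-pad property: translation by a fixed element is a bijection of the additive group of $\mathbb{F}_p$, so adding a uniformly random $z$ destroys any dependence on $x$. Since mutual information is defined for random variables, I would treat $x$ as a random variable with an arbitrary (unknown) distribution, drawn independently of $z$, and prove the stronger statement that $y=x+z$ is \emph{statistically independent} of $x$; this is equivalent to $I(y;x)=0$. The first step is therefore to invoke the standard characterization that $I(y;x)=0$ holds if and only if $P(y\mid x)=P(y)$ for every pair with $P(x)>0$, so it suffices to compute the conditional law of $y$ given $x$ and verify that it does not depend on $x$.

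The key step is the group-theoretic observation that for each fixed $x\in\mathbb{F}_p$ the map $z\mapsto x+z$ is a bijection of $\mathbb{F}_p$, with inverse given by subtraction of $x$ in the field. Using the independence of $z$ from $x$ together with the uniformity of $z$, this yields
\[
P(y\mid x)=P(z=y-x\mid x)=P(z=y-x)=\tfrac{1}{p},
\]
a value that is the same for every $y$ and, crucially, does not depend on $x$. Marginalizing then gives $P(y)=\sum_{x}P(x)\,\tfrac{1}{p}=\tfrac{1}{p}$, so that $P(y\mid x)=P(y)$ and $x,y$ are independent; equivalently, one may phrase the same computation through entropies as $H(y\mid x)=\log p=H(y)$, whence $I(y;x)=H(y)-H(y\mid x)=0$.

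The only real content — and hence the point to state carefully rather than a genuine obstacle — is the pair of hypotheses that make the translation argument go through: that $z$ is drawn \emph{independently} of $x$, and that the uniform distribution on the finite additive group $\mathbb{F}_p$ is translation-invariant. Once these are in place the conclusion is immediate and requires no estimation. I would close by remarking that the result lifts directly to the setting of the encoding in Equation~\eqref{eq:inference_blinding}: applying the scalar statement coordinate-wise to the noise vector $\mathbf r$, and noting that each $\bar{\mathbf x}^{(i)}$ seen by a GPU is a fixed linear combination of the $\mathbf x^{(j)}$'s plus an independent uniform term, gives $I(\bar{\mathbf X}_M:\mathbf X_K)=0$ as claimed.
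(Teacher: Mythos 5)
Your proposal is correct and follows essentially the same route as the paper: the paper's proof is a one-line assertion that uniformity of $z$ makes $x$ and $y$ independent, hence $I(y;x)=0$, and your translation-bijection computation of $P(y\mid x)=1/p$ is exactly the standard justification of that assertion, with the independence of $z$ from $x$ (which the paper leaves implicit) properly made explicit. Your closing remark lifting the scalar statement to the encoded vectors also matches the paper's own discussion immediately following the lemma.
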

\begin{proof}
Since the random variable $z$ is uniform over the field, the random variables $x$ and $y$ will be independent. Therefore, their mutual information will be zero.
\end{proof}
we can apply Lemma \ref{lem:privacy_g} to our work, by replacing the scalar $x$ with the linear combination of $x_i$'s, and replacing the random variable $z$ with the linear combination of uniform random variables $R_i$'s (note that linear combination of uniform random variables in the field would be a uniform random variable).

\textbf{Quantization}: There are various quantization schemes proposed in literature~\cite{gupta2015deep,so2020byzantine}, which strive to preserve accuracy while training with quantized data. 
Algorithm~\ref{alg:quant} presents our quantization code. At the first step of the quantization, we convert all the values from the floating-point domain to the fixed-point domain. To do so, after selecting the number of fractional bits, $l$, we multiply all the values by $2^l$ and then call the \texttt{Round} function to convert the values to integer representation as shown in lines 12 to 18. At the next step, we use function \texttt{Field} to transfer the computations to Finite Field. To do so, for negative numbers, we first add $p$ to them to transfer every value to $\mathbb F_p$. After that, we compute the remainder of the computation to $p$. Please note that both $\mathbf X^i$s and $\mathbf W^i$s are multiplied by $2^l$ before linear operations. Hence, bias should be multiplied by $2^{2l}$ so that after the linear operation we will be able to convert everything back to floating-point representation correctly.

After the linear computation on GPUs, TEE receives $\bar{\mathbf Y}$. TEE then subtracts $p$ from all the elements larger than $\frac p 2$ to restore negative numbers. At the next step, the decoding function is called to cancel the noise from linear operation results ($\mathbf Y_q$). Finally, \texttt{Round}(${\mathbf Y_q}\times 2^{-l}$) is called and at the end by multiplying $2^{-l}$ to the output, we have the result of the desired linear operation. Now, the original floating-point format is recovered for computing Non-Linear operations in TEE.

Please note that quantizing the weight and bias values does not have a privacy implication. These values may be stored and quantized in the GPUs so that the linear operation operates correctly in the Finite Field. For ResNet and MobileNet DNNs, this static quantization shows a good accuracy because these DNNs have normalization layers that keep the values within a range. However, for VGG models a slightly different quantization is used to dynamically normalize the values of inputs and weights if they pass the limits as suggested in~\cite{yang2019swalp, ng2021goten}. We normalize the values by dividing them to the maximum absolute entry of the vector. 

\textbf{Random Numbers}: Random scalars ($\mathbf A, \mathbf B, \boldsymbol \Gamma$) and random vectors ($\mathbf R$) are generated in the Finite Field $\mathbb F_p$ and encoding/decoding computations are defined over Field $\mathbb F_p$.

\begin{algorithm}
\caption{The pseudo-code for quantization}\label{alg:quant}
\begin{algorithmic}[1]
\Procedure{Linear Operations}{$\mathbf{W}, \mathbf{X}, \mathbf{b}$}
    \State $\mathbf W_q$ = Field(Round($\mathbf W \times 2^l$))
    \State $\mathbf b_q$ = Field(Round($\mathbf b \times 2^{2l}$))
    \State $\mathbf X_q$ = Field(Round($\mathbf X \times 2^l$))
    \State Generate $\mathbf R$ and all the random scalars $\mathbf A, \mathbf B, \boldsymbol \Gamma$ in the field $F_p$.
    \State $\bar{\mathbf X}$ = Encode($\mathbf{X}, \mathbf r, \mathbf A$) in the Field using equation \eqref{eq:encode_X}. 
    \State GPUs compute $\bar{\mathbf Y} = \langle \mathbf W_q, \bar{\mathbf X}, \mathbf b\rangle$
    \State ${\mathbf Y_q}$ = \text{Field} $(\bar{\mathbf{Y}} \mathbf A^{-1})$
    \State $\mathbf Y = \text{Round}(\mathbf Y_q \times 2^{-l})\times 2^{-l}$
    \State \textbf{return} $\mathbf Y$
\EndProcedure

\Procedure{Round}{$\mathbf X$}
    \For {$\forall X_i \in \mathbf X$}
        \If {$(X_i - \floor{X_i} < 0.5)$}
            \State $ {X_i}^r \gets \floor{X_i}$
        \Else {}
            \State $ {X_i}^r \gets \floor{X_i} + 1 $
        \EndIf
    \EndFor
    \State \textbf{return} $\mathbf X^r$
\EndProcedure

\Procedure{Field}{$\mathbf X$}
\For {$\forall X_i \in \mathbf X$}
        \If {$({X_i}^f < 0)$}
            \State $ {X_i}^f \gets {X_i}^f + p$
        \EndIf
        \State ${X_i}^f \gets {X_i}\quad \text{mod}\quad p$ 
    \EndFor
    \State \textbf{return} $\mathbf X^r$
\EndProcedure

\end{algorithmic}
\end{algorithm}

\textbf{Colluding GPUs}: As we explained in Section~\ref{sec:colluding}, we are using $K'$
 GPUs among which at most $M$ collude. Therefore, we need at least $M$ noise vectors according to DarKnight's scheme as shown in Equation~\ref{eq:encode_X}.
Now, the matrix $RA_2$ is $N$ by $S$ dimensional. Assume that the matrix $A_2$ is full rank and the noises in the matrix $R$ are independent and uniform over the field $F_p$. Now for any subset $\mathcal I\subseteq [K']$ with $|\mathcal I|\leq M$ of GPUs that collide, they will share the equations $XA_{1,\mathcal I} + RA_{2,\mathcal I}$ (where $A_{\mathcal I}$ is a sub-matrix of $A$ whose columns are chosen from the set $\mathcal I$). Since $A_{2}$ is full-rank, any subset of its columns are also full rank. Therefore, there is no linear combination that can vanish the noise in the $M$ shared equations, and any linear combination of uniform random variables over a field is also uniform. Hence, no matter how the $M$ equations are combined linearly, the result of the combination will seem uniform over the field $F_p$ to GPUs' perspective.

\section{DarKnight Implementation}
\label{sec:aggr}
In this section we  provide the details on how to efficiently implement DarKnight on off-the-shelf hardware consisting of Intel SGX CPU attached with Nvidia GPUs.

\textbf{Encoded Data Storage During Forward Pass}: The encoded intermediate feature map ($\bar {\mathbf x}^{(i)}$) that is received by GPUs during the forward propagation is also used during backward propagation for computing the weight update, as shown in Equation~\ref{eq:gamma_lin}. Rather than requesting TEE to resend the encoded inputs during back propagation, our current implementation of \name stores these encoded inputs within the GPU memory.

\textbf{Large Batch Aggregation}:
One of the benefits of the decoding process employed by DarKnight's back propagation is that an aggregate weight update across all the inputs in a virtual batch is computed by the TEE. Hence, individual weights associated with each input are not revealed (and not even computed by the TEE). The aggregate weight update over a virtual batch is essentially a customized version of secure aggregation for weigh updates~\cite{bonawitz2017practical}. As explained in section \ref{sec:training}, model parameters ($\mathbf W$) are stored outside the enclave and are visible to GPUs. Once the aggregate weight update $\triangledown \mathbf W$ is computed in the TEE, that data is sent to the GPUs so they can updates the model. Some prior works~\cite{zhu2019deep,geiping2020inverting} have shown that  
 $\triangledown \mathbf{W}$ may leak some information about the intermediate features which may eventually leads to input leakage as a side channel information.

While  side channels are outside the scope of this work, we believe that in practice the side channel information leakage can be curtailed. In particular, this prior work~\cite{zhu2019deep} also observed that one solution to reduce leakage drastically is to increase the batch size over which the aggregate weight updates are computed. Thus, using a large batch size, 
one can eliminate nearly all the side channel leakage. \name can be easily adapted to expose the $\triangledown \mathbf{W}$ of a large batch to the GPU memory. Earlier, we introduced the notion of a \textit{Virtual Batch}, which is essentially the largest number of images that we can be processed at the same time and fits inside SGX. The size of the virtual batch is purely a limitation of SGX memory. 

In our implementation we enable large batch weight aggregation as follows. The TEE still computes $\triangledown \mathbf{W}$ at the granularity of a virtual batch. However, instead of updating the weights immediately, our implementation aggregates the $\triangledown \mathbf{W}^v$ of multiple virtual batches $v>1$ within the TEE without disclosing the weight updates to the GPUs. The TEE internally accumulates $\triangledown \mathbf{W}^v$ associated with all the virtual batches from a large batch. 

One challenge we faced with this implementation is that to store multiple $\triangledown \mathbf{W}^v$ for all the virtual batches inside SGX exceeds the memory limitation. To resolve this problem, after each virtual batch computation, we encrypt the pages storing $\triangledown \mathbf{W}^v$ and then write it back to the untrusted memory. Note that this encryption and eviction is outside the critical path. Once all the virtual batches are processed the TEE then incrementally reloads each of the $\triangledown \mathbf{W}^v$, decrypts them and creates an aggregate update. 

We further optimized the reloading and aggregation of $\triangledown \mathbf{W}^v$ using a sharding technique. Note that the aggregation is highly parallel task. We break down the $\triangledown \mathbf{W}^v$ into multiple shards before storing them to the untrusted memory. For example, each shard may be a set of DNN layers. Then during the reloading process we perform shard-wise aggregation and then send those updates to the GPUs to incrementally update the model parameters. This pipelined approach to shard-wise aggregation essentially eliminates all the performance penalties associated with large batch aggregation.      

\begin{algorithm}
\caption{The pseudo-code for parameters update}\label{alg:back}
\begin{algorithmic}[1]
\Procedure{Backward}{$\mathbf{W}$}
    \State $V \gets VirtualBatch.size()$
    \State $N \gets LargeBatch.size()$
    \State $M \gets \frac {N}{V}$
    \State Initialize memory Pointer 
	\For {$v=1,2,\ldots,M$}\Comment{for each virtual batch}
	   \For {$l=1,2,\ldots,L$}\Comment{for each layer}
        \State Compute $\triangledown \mathbf{W}^{v}_{l}$ 
        \EndFor
        \State $\triangledown \mathbf{W'}^{v}$ = Encrypt($\triangledown \mathbf{W}^{v}$)
        \State Pointer.append (Evict($\triangledown \mathbf{W'}^{v}$))
    \EndFor
    \State $\triangledown \mathbf{W} \gets$ UpdateAggregation(Pointer)
    \State $\mathbf{W}^{\text{new}} = \mathbf{W}^{\text{new}} - \eta\times \triangledown \mathbf{W}$
\State \textbf{return} $\mathbf{W^{\text{new}}}$
\EndProcedure
\Procedure{UpdateAggregation}{Pointer}
    \State $V \gets VirtualBatch.size()$
    \State $N \gets LargeBatch.size()$
    \State $M \gets \frac {N}{V}$
	\For {$v=1,2,\ldots,M$}
        \State $\triangledown \mathbf{W}^{v} = $ Decrypt(Pointer $ + v \times W.size()$)
        \State $\triangledown \mathbf{W} = \triangledown \mathbf{W} +\triangledown \mathbf{W}^{v}  $
    \EndFor
    \State \textbf{return} $\triangledown \mathbf{W}$
\EndProcedure

\end{algorithmic}
\end{algorithm}

\textbf{Pseudo Code of Aggregation:}
Algorithm~\ref{alg:back} shows the steps of backward propagation. In line 4, the actual batch size is divided to the virtual batch size to get the number of virtual batches we have to process before updating the weight. Line 5 initializes the pointer to the untrusted memory where we store $\triangledown \mathbf W$s before the aggregation. 
Line 6 repeats the for loop for each virtual batch.
Line 7 to 8 shows how $\triangledown \mathbf W$ is computed for each layer of DNN.
Line 9 Encrypt $\triangledown \mathbf{W}^{v}$ that is containing all the weight updates of the network for that virtual batch.  
Line 10 calls evict function containing $\triangledown \mathbf{W'}^{v}$. As a result, this page is moved to the untrusted memory with all the precautions. 
Line 11 calls Aggregation function. This function has a reference to all the pages containing part of the $\triangledown \mathbf{W}$ and construct the whole $\triangledown \mathbf{W}$.
Finally in Line 12 the $\triangledown \mathbf{W}$ is sent to GPUs for the weight updates.

In Figure~\ref{fig:impl}, we show how different sizes of virtual batch size can speedup the aggregation time. Having a larger virtual batch size helps with the less number of encryption/eviction and decryption. However, increasing a size of virtual batch at a certain point, will increase the latency because the whole data cannot fit inside the enclave memory. As depicted in this figure, if we want to update the weight for batches of size 128, for all our three networks, virtual batch size of 4 ($K=4$) shows the best performance. Note that this value of $K$ will likely increase in future SGX implementation as the protected memory segment size grows. But in our implementation on current Intel SGX systems K=4 shows the best performance.  

\begin{figure}[tbp]
 \centering
 \includegraphics[width=\linewidth]{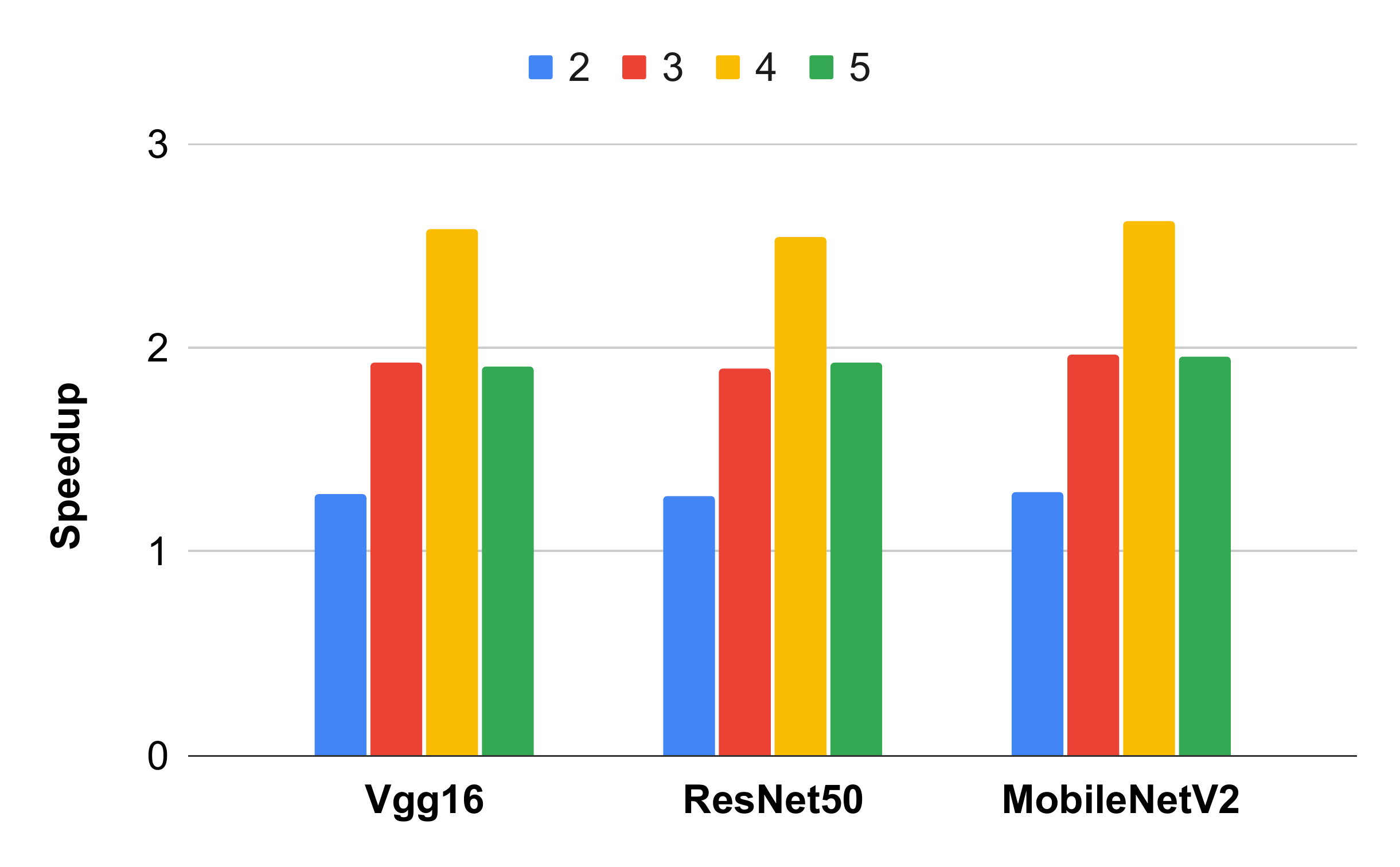}
 \caption{Effect of different virtual batch sizes ({number of GPUs(K')-1 when M = 0}) on the aggregation ({decoding}) speedup relative to virtual batch size K = 1}
  \vspace{-5mm}
 \label{fig:impl}
\end{figure}

\begin{figure*}[tbp]
  \centering
  \begin{subfigure}[b]{0.30\linewidth}
    \includegraphics[width=\linewidth]{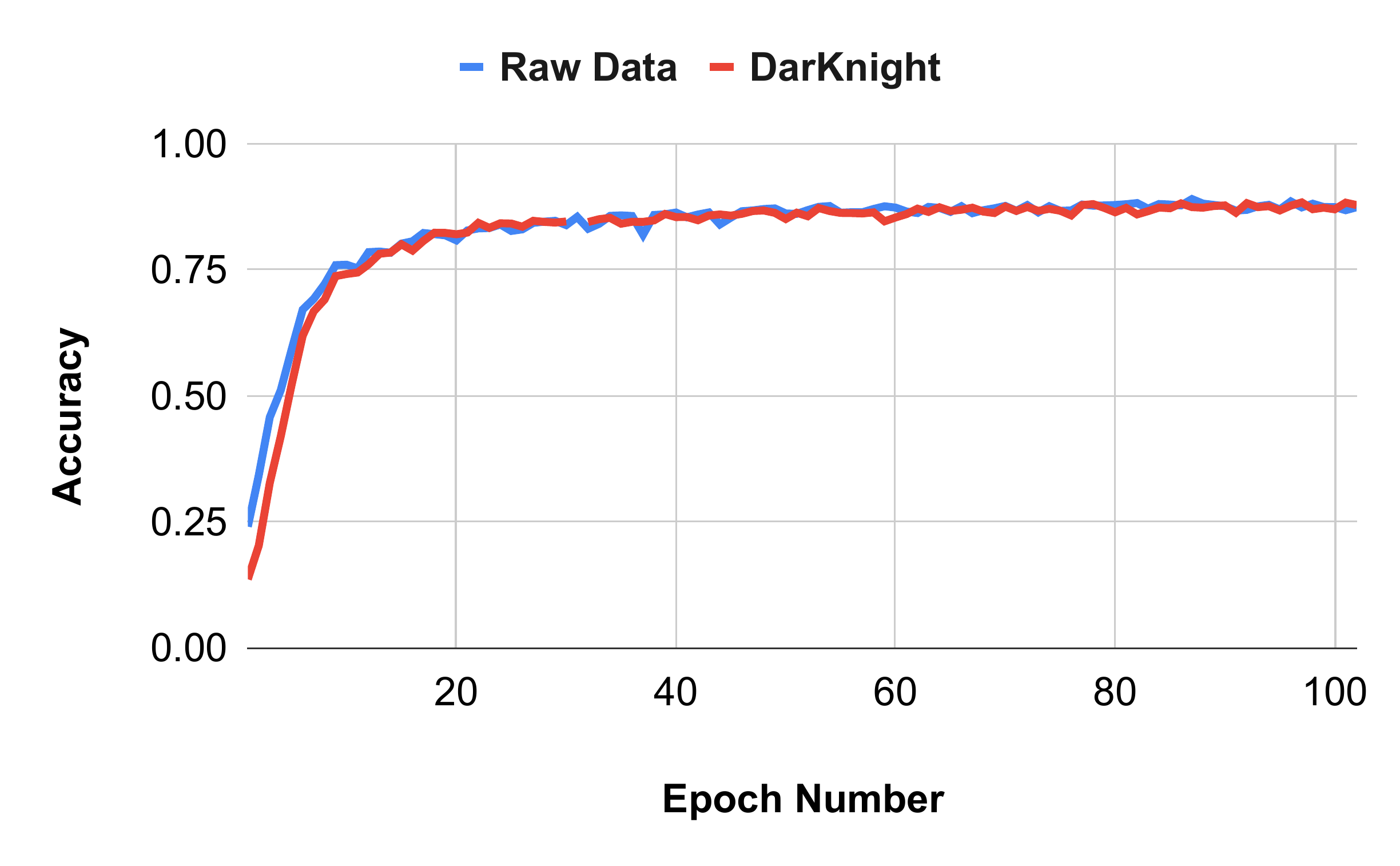}
  \end{subfigure}
  \begin{subfigure}[b]{0.30\linewidth}
    \includegraphics[width=\linewidth]{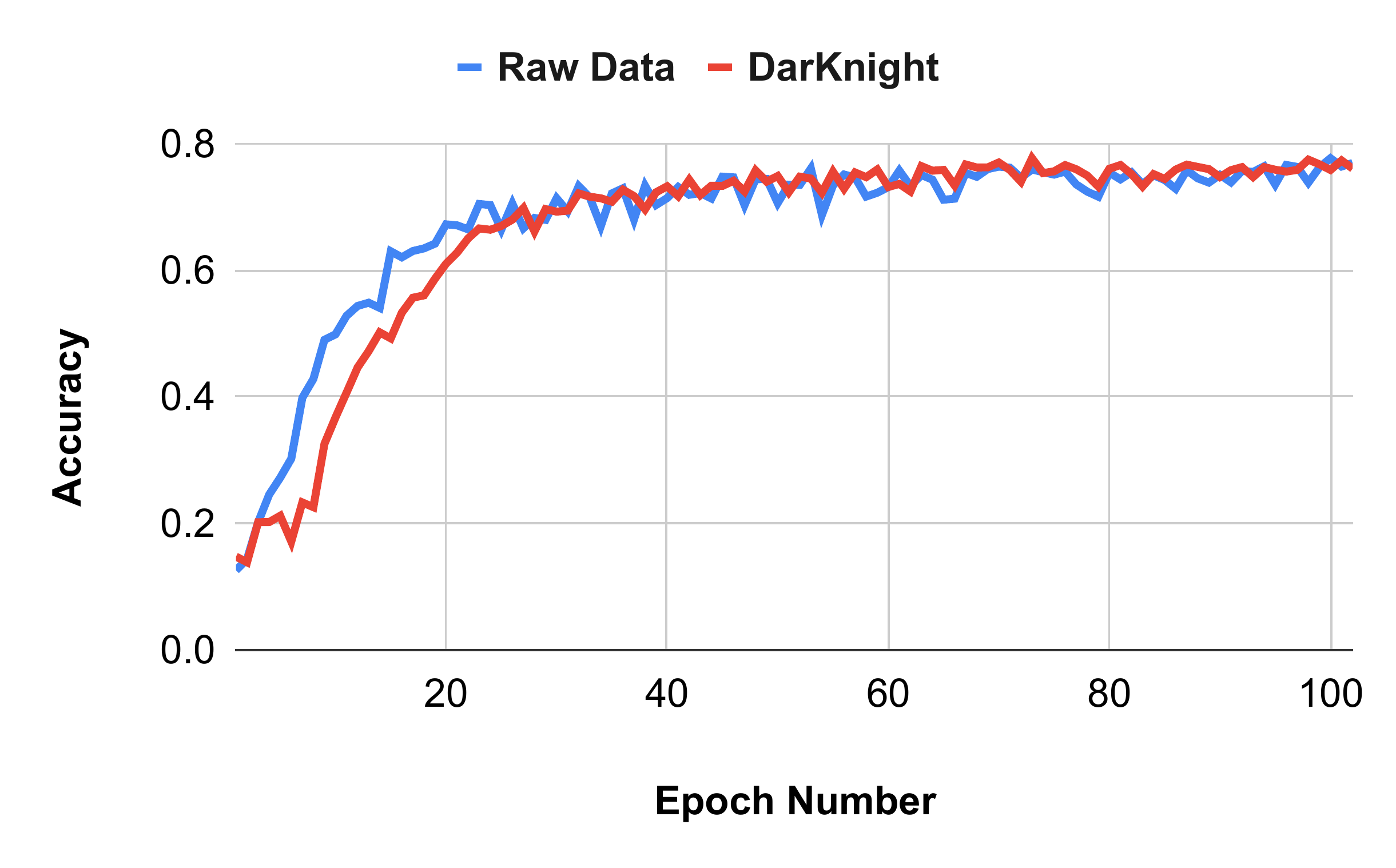}
  \end{subfigure}
   \begin{subfigure}[b]{0.30\linewidth}
    \includegraphics[width=\linewidth]{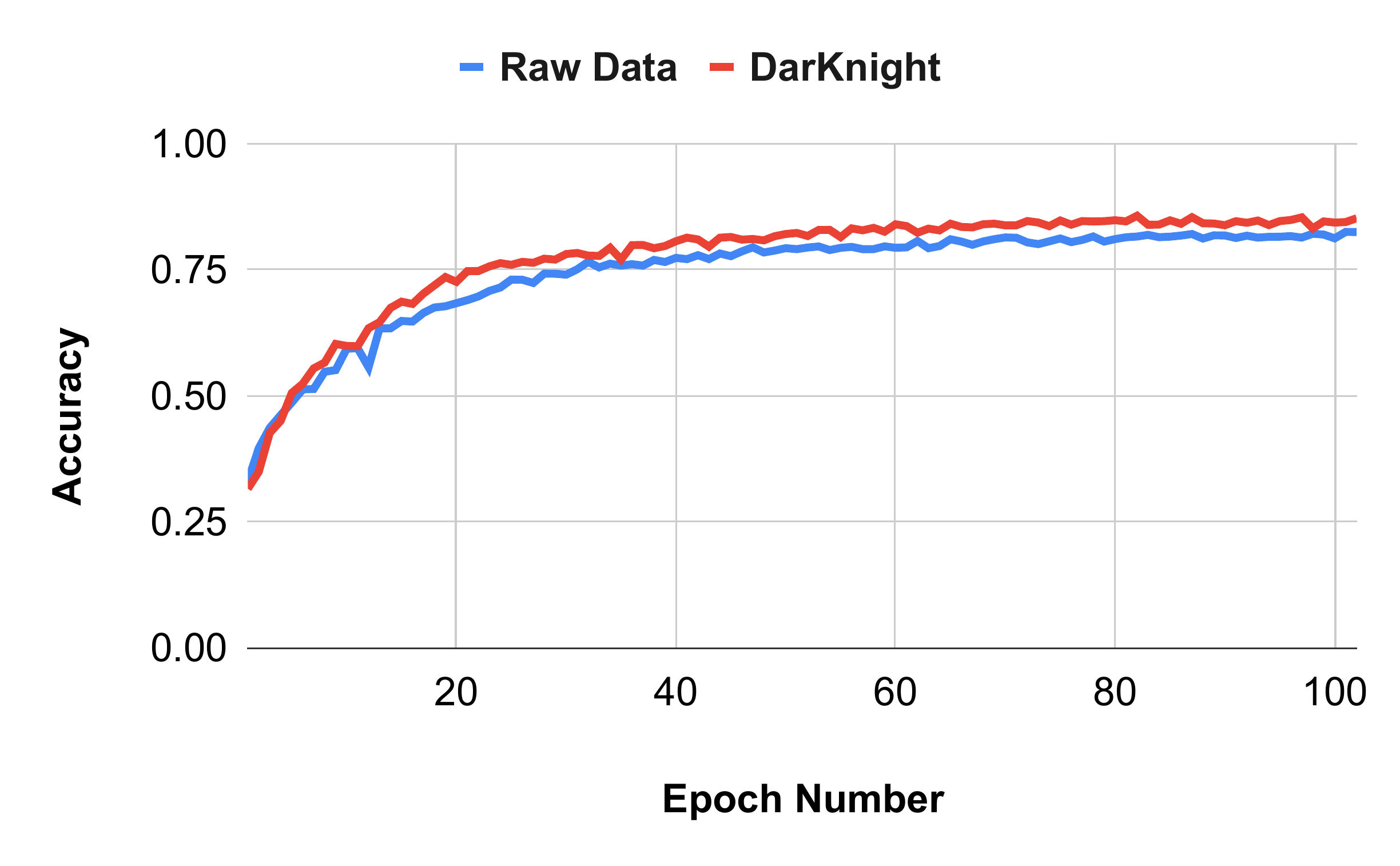}
  \end{subfigure}
 \vskip -0.15in
  \caption{Training accuracy of DarKnight for CIFAR-10 with (a) VGG16 (b) ResNet50 (c) MobileNetV2}
  \label{fig:training0}

\end{figure*}

\section{Experimental Setup and Results}
\label{sec:setup}
We implemented \name on a server consisting of an Intel Coffee Lake E-2174G 3.80GHz processor with SGX support, and Nvidia GeForce GTX 1080 Ti GPUs. The server has 64 GB RAM and supports Intel Soft Guard Extensions (SGX). DarKnight's training scheme and the related unique coding requirements are implemented as an SGX enclave thread where both the decoding and encoding are performed. For SGX implementations, we used Intel Deep Neural Network Library (DNNL) for designing the DNN layers including the Conv layer, ReLU, MaxPooling, and Eigen library for Dense layer. We used Keras 2.1.5, Tenseflow 1.8.0, and Python 3.6.8.

To evaluate the method, we used three different DNN models with different structures: VGG16~\cite{simonyan2014very} with 138 million parameters, ResNet50~\cite{he2016deep} with 23 million parameters and, MobileNetV2~\cite{sandler2018mobilenetv2} with 3.4 million parameters. We chose MobileNetV2 because it is the worst-case benchmark for our model as it reduces linear operations considerably (using depth-wise separable convolution), thereby reducing the need for GPU acceleration. 
We used CIFAR-10 \cite{krizhevsky2009learning} that has 50,000 training images evenly distributed between 10 categories, 
and ImageNet~\cite{russakovsky2015imagenet} with about 1.2 million images and 1000 categories as our datasets.

\subsection{Training Results}
For evaluating training performance, two critical aspects are examined: accuracy impact and speedup of the training.

\textbf{Effect of Random Noise and Quantization on Accuracy}: Quantization, using fixed-point arithmetic and adding the uniform random noise to encode the data for privacy reasons may cause accuracy degradation on GPUs. To study the impact, Figure~\ref{fig:training0} shows the training accuracy on clean data when there is no privacy scheme in the system and the accuracy of training in the presence of data encoding on VGG16, ResNet50, and MobileNetV2 on CIFAR-10. 
For training on CIFAR-10, as depicted in Figure~\ref{fig:training0} after 100 epochs, only a negligible accuracy degradation (less than 0.01 for all the applications) is observed. Very similar behavior is observed across a wide range of models.  
{For MobileNetV2, even a slight increase is observed. This behaviour is observed in prior works as quantization may remove the noise on data~\cite{lin2019defensive,lin2016fixed}.}

\begin{figure}[tbp]
    \includegraphics[scale=1,width=\linewidth]{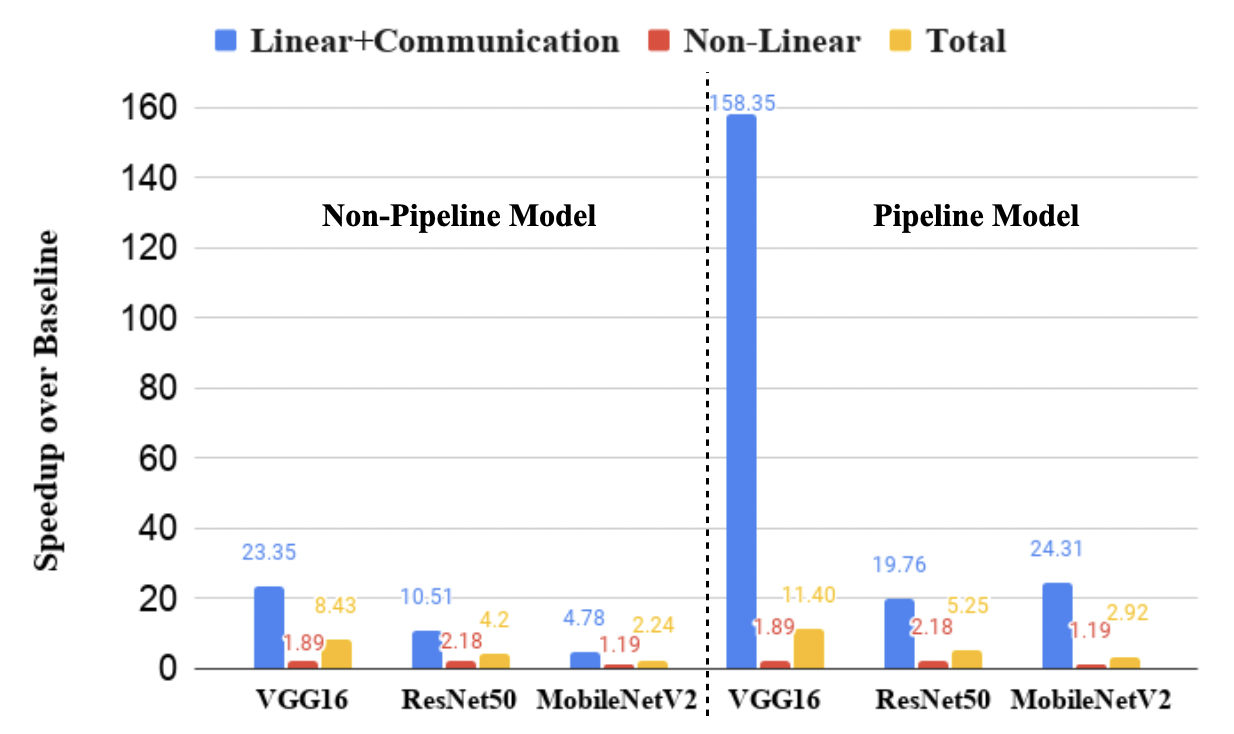}
    \caption{ImageNet Training Speedup for None-Pipeline and Pipeline Settings}
    \label{fig:trainingtime}
    \vspace{-4mm}
\end{figure}

\textbf{Training Execution Time, Non-Pipelined}: Figure~\ref{fig:trainingtime} shows the speedup of training using DarKnight with $K=2$ images encoded and offloaded to $3$ GPUs relative to the baseline fully implemented on SGX. The encoding operation in TEE and the linear operations in GPUs are serialized in this design, which we refer to as the non-pipelined version. The figure breaks down the execution time into linear and non-linear categories. Non-linear category includes all the operations performed in TEE including the encoding and decoding overheads, while linear costs include operations performed on GPUs plus the communication cost to move the encoded data to the GPUs and bring the computed results back to the TEE. For emulating the communication time we used Infiniband~\cite{shanley2003infiniband,pfister2001introduction} network switch between TEE and each GPU which has a 40 Gbps bandwidth. We assume a non-pipelined implementation where the data encoding and decoding process are performed sequentially. 

The results show that DarKnight speeds up the total linear operation time of VGG16 by $23$x  by exploiting GPUs parallelism. 
Even the non-linear operations see a $1.89X$ speedup. This result is due to the observation that the baseline has to encryption/decrypt some of the large intermediate feature maps that do not fit within SGX memory. 
Overall for VGG16 the execution time is improved by more than $8X$ with DarKnight. Both ResNet and MobileNet models have batch normalization layers which are non-linear operations that are computation-intensive and cannot be offload to GPU accelerators. Even in this worst-case scenario, performance gains of $4.2X$ and $2.2X$ are achieved.

Table~\ref{tab:training1} shows a more detailed breakdown of the fraction of time spent in various operations in each setting. In the baseline majority of time is spent in the linear operations. For VGG16 baseline spends $84\%$ of the time on linear operations. Due to batch-normalization overheads and reduced linear operation counts ResNet50 and MobileNetV2 spend around $60\%$ in linear execution time. Using the DarKnight the distribution of execution time is reversed. VGG16  spends nearly $50\%$ of the time on non-linear operations as GPUs accelerate linear operations. It also pays $19\%$ overhead for encoding and decoding. For ResNet50 and MobliNetV2 this overhead is lower because batch normalization dominates the execution time. DarKnight also pays a new communication overhead to move the data between TEE and GPUs. Across all three models, about 20\% of the total training time in DarKnight is spent in this communication phase.  
\begin{table}[tbp]
    \caption{ImageNet Training Time Breakdown for Different Networks ({values are percentage relative to the total execution time})}
    \label{tab:training1}
    \resizebox{\linewidth}{!}{%
\begin{tabular}{|l|cc|cc|cc|}
\hline
\multirow{2}{*}{Operation} & \multicolumn{2}{c|}{\textbf{VGG16}} & \multicolumn{2}{c|}{\textbf{ResNet50}} & \multicolumn{2}{c|}{\textbf{MobileNetV2}} \\ \cline{2-7} 
                           & DarKnight     & Basline    & DarKnight      & Baseline      & DarKnight        & Basline       \\ \cline{2-7} 
Linear                     & 0.04          & \cellcolor[gray]{0.6}0.84       & 0.04           & \cellcolor[gray]{0.6}0.61          & 0.06             & \cellcolor[gray]{0.6}0.62          \\
NonLinear                  & \cellcolor[gray]{0.8}0.50          & 0.16       & \cellcolor[gray]{0.8}0.75           & 0.39          & \cellcolor[gray]{0.8}0.63             & 0.38          \\
Encoding-Decoding                  & 0.19          & 0       & 0.01           & 0          & 0.08             & 0          \\
Communication              & 0.26          & 0          & 0.2            & 0             & 0.23             & 0             \\ \hline
\end{tabular}
}
 \vskip -0.2in
\end{table}

\textbf{Training Execution Time, Pipelined}: The non-pipelined implementation in the previous section assumes that the data encoding/decoding in TEE is serialized with the GPU executions. However, the communication overhead can be easily hidden by overlapping communication and computations. Meaning that, while data communication happens between GPUs and TEE, TEE can encode the next data batch of data. Pipelined implementation with asynchronous SGD has been designed in prior work~\cite{zhang2015staleness,narayanan2019pipedream}. A pipelined implementation encodes one virtual batch and launches it to the GPUs. While GPUs are performing linear operations, the next virtual batch is encoded under the shadow of GPUs execution time.  Figure~\ref{fig:trainingtime} shows the results. By Overlapping the communication and computation, DarKnight speeds up the total linear operation time by $20-158$x with pipelined design which leads to an overall higher speedup.

\begin{figure*}[htbp]
  \begin{subfigure}[b]{0.45\linewidth}
    \includegraphics[width=\linewidth]{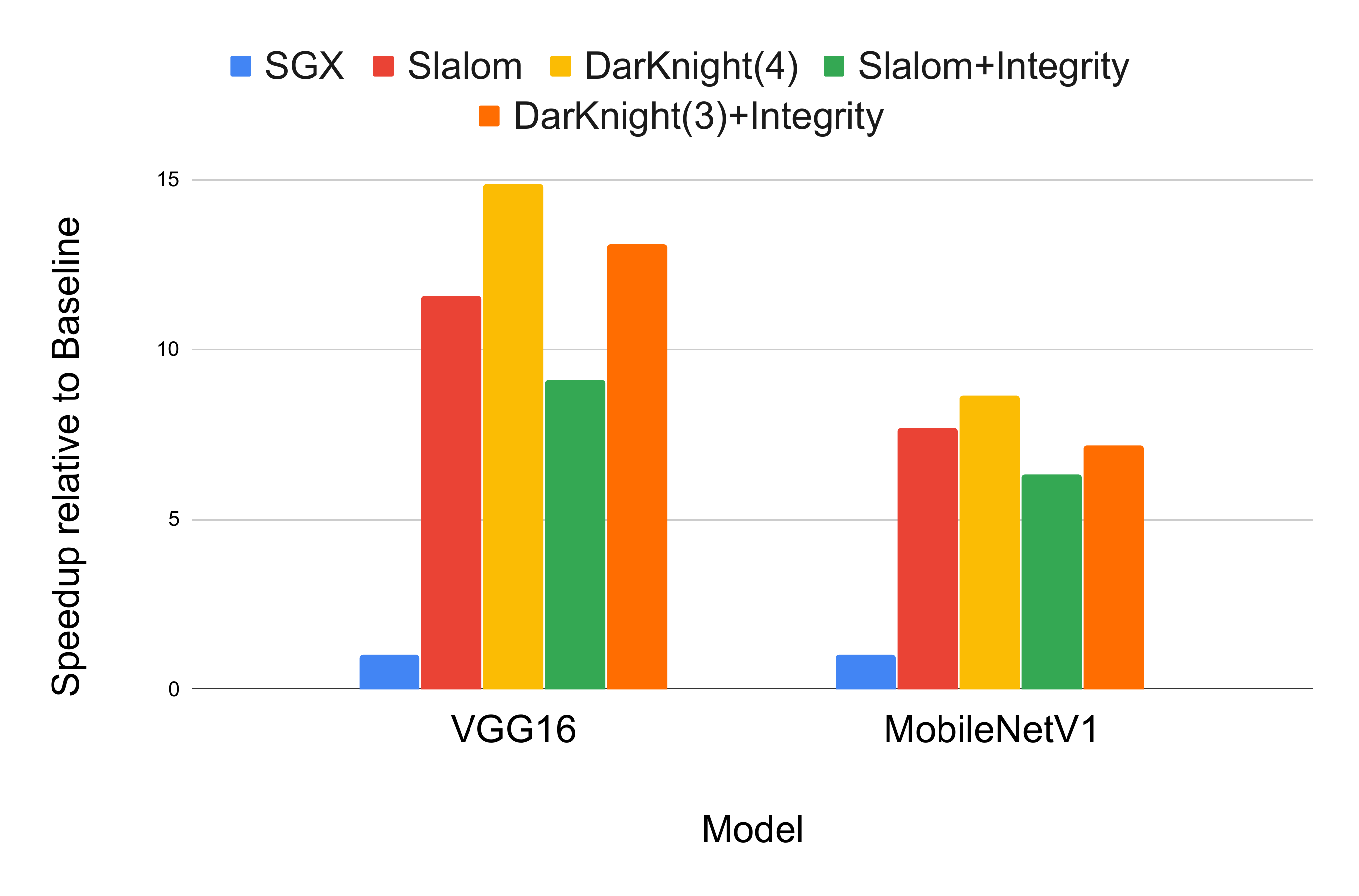}
     \label{fig:inference1}
  \end{subfigure}
  \begin{subfigure}[b]{0.45\linewidth}
      \includegraphics[width=\linewidth]{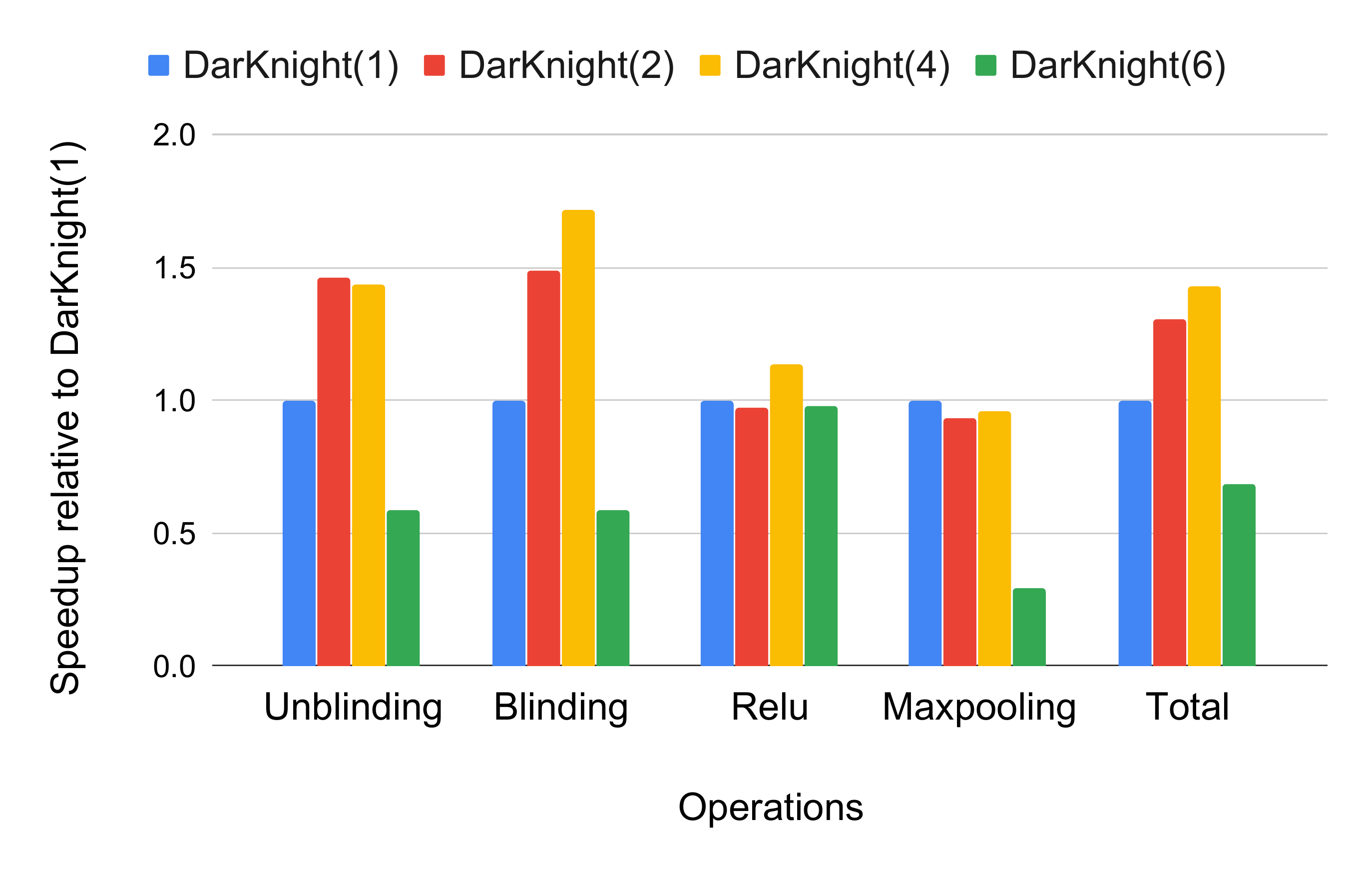}
   \label{fig:combine}
  \end{subfigure}
  \vspace{-9mm}
  \caption {a) Inference speedup comparison with different implementations relative to SGX for VGG16, and MobileNetV1 on ImageNet.
  b) Inference speedup comparison of different operations relative to DarKnight(1) for different virtual batch-sizes for VGG16 on ImageNet.
  }
   \vspace{-3mm}
  \label{fig:2}
\end{figure*}

{\textbf{Non-Private Model:} Table~\ref{tab:GPU3} shows the training speedup of using three unprotected GPUs with Pytorch library over SGX-only computation or using DarKnight (3 GPUs). However, unprotected GPUs does not provide any privacy guarantee. Using only SGX to provide privacy leads to two orders of magnitude slowdown. DarKnight is designed to bridge this gap between TEE and GPU speed while protecting the privacy of data. Please note that, other privacy preserving methods, such as fully homomorphic encryption and secure multi-party computing, are orders of magnitude slower than GPU implementation~\cite{juvekar2018gazelle,ghodsi2017safetynets}.}
\begin{table}[htb]
\vspace{-2mm}
\caption{{Non-Private Training Speedup on 3-GPUs Relative to SGX-only and DarKnight (3 GPUs) for Training on ImageNet} }
\label{tab:GPU3}
\centering
\resizebox{\columnwidth}{!}{%
\begin{tabular}{c ccc}
\hline
{Model} & {VGG16}   & {ResNet50} & {MobileNetV2} \\ \hline
{Speedup over DarKnight}    & $23.93$  & $41.01$  & $27.51$        \\
{Speedup over SGX}  & $273.26$  & $216.62$ & $80.31$  \\
\hline
\centering
\vspace{-8mm}
\end{tabular}
}
\end{table}

\begin{figure}[tbp]
 \centering
 \includegraphics[width=0.7\linewidth]{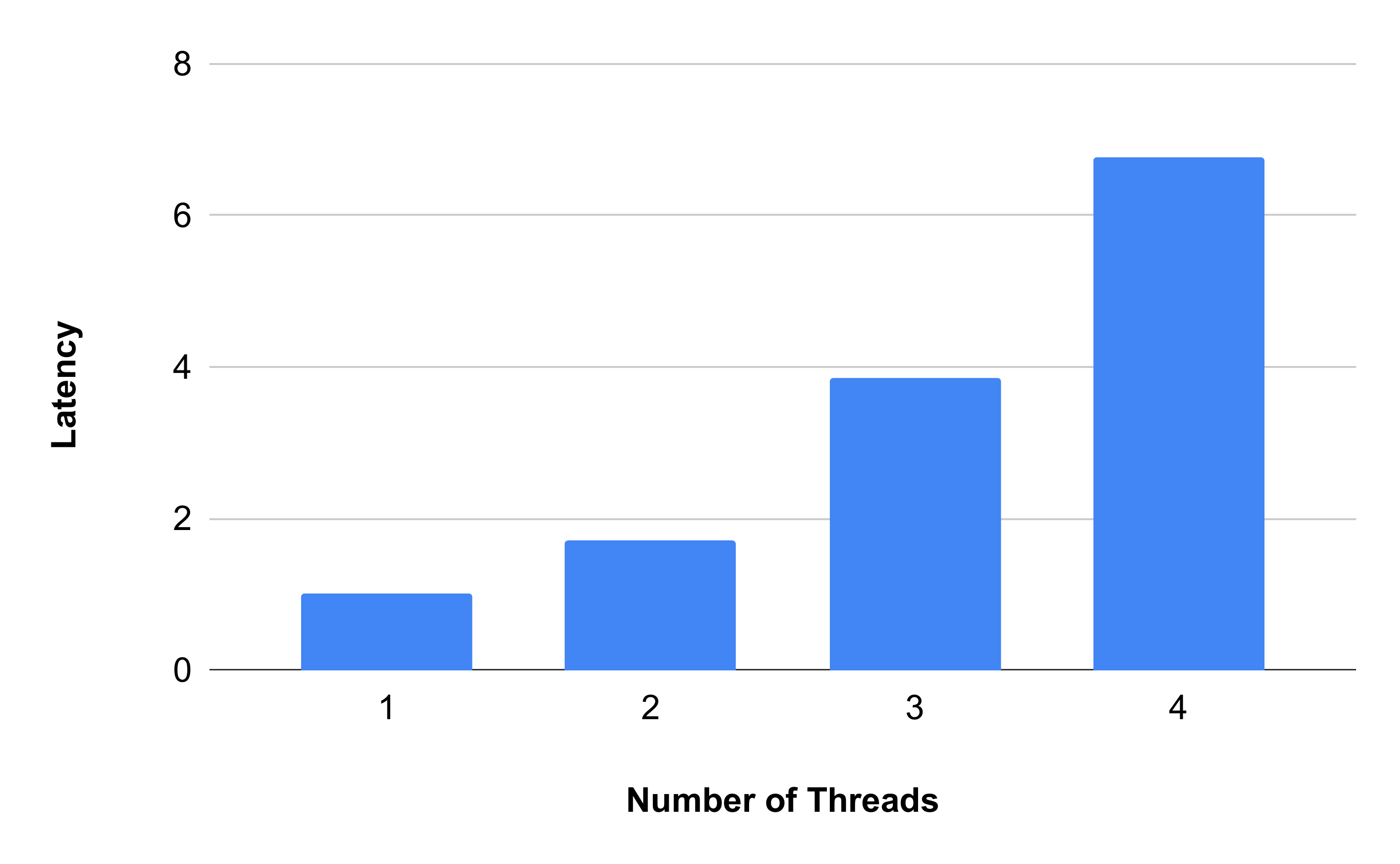}
 \vspace{-4mm}
 \caption{{Effect of Multi-threading on the execution time of training VGG16 relative to the single thread execution time}}
 \vspace{-4mm}
 \label{fig:thread}
\end{figure}

{\textbf{Parallelism in Baseline:} Using multiple SGX threads one can train concurrently on different batches. Figure~\ref{fig:thread} shows the effect of multi-threading. The latency grows with more threading within SGX. The main reason for this counter-intuitive behavior is that training large models are very memory intensive. But current implementations of SGX have limited memory encryption unit capacity. When the training data exceeds the memory limit there is substantial performance overhead in moving data between protected memory and untrusted DRAM. 
The same behaviour is observed in~\cite{tramer2018slalom}. So the the baseline with that multi-threading cannot provide a better performance for such memory intensive computations.}

\subsection{Inference Results}
\label{sec:infresult}
In this section we briefly explain Slalom~\cite{tramer2018slalom} a prior work that used SGX and GPU for inference and why that strategy is unable to handle training. Then we evaluate the speedup of DarKnight in the inference phase.

\textbf{Slalom Inference}
\label{sec:litreture}
Slalom is an inference framework that uses TEEs to protect data privacy and integrity. Slalom uses the Intel SGX enclave to blind input data $\mathbf x$ from a client with an additive stream cipher noise $\mathbf r$. The blinded data $(\mathbf x +\mathbf r)$ is then sent to an untrusted GPU where linear operations are performed. The computed data  $\mathbf W \cdot (\mathbf x +\mathbf r)$ is then returned to the enclave which can decode the correct computational output $\mathbf W \cdot \mathbf x$ by subtracting the precomputed $\mathbf W \cdot \mathbf r$. Here $\mathbf W$ is the model parameter matrix. 
But this encoding approach cannot be applied for training, since it precomputes $\mathbf W \cdot \mathbf r$. Precomputing the blinding factors is not feasible during training since the model parameters $\mathbf W$ are updated after processing every batch. Computing $\mathbf{W} \cdot \mathbf{r}$ inside the SGX after every batch also defeats the purpose of offloading the linear computations to GPU. {This fundamental challenge make it impossible to use Slalom encoding for training.} 
As such \name developed a more comprehensive encoding strategy  to handle training on private data {while providing rigorous privacy guarantee}. Nonetheless \name can perform inference on private data. Hence, we compare the inference performance of \name and Slalom.   
For a fair comparison, we implemented \name using Eigen library which is the design used in Slalom. Eigen is a high-performance C++ based linear algebra library. 

\textbf{Inference Speedup}: Fig.~\ref{fig:2} (a) compares the speedup of the inference for VGG16 and MobileNetV1 across five different configurations. The baseline bar (SGX) performs all the calculations within SGX. The red bar uses Slalom blinding while trusting GPU that results are always correct, DarKnight(4) is our model while using a virtual batch size of 4. Slalom+Integrity bar shows the performance when Slalom's data integrity verification (using Freivalds algorithm~\cite{blum1975toward}) is deployed to verify GPU computations. DarKnight(3)+Integrity uses DarKnight with a virtual batch size of 3 and an additional equation to redundantly compute all the results twice for integrity verification. 

For VGG16, DarKnight(4) provides $15X$ speedup, compared to the SGX only baseline, and $30\%$ improvement over Slalom. Slalom's implementation encrypts $\mathbf{W}\cdot\mathbf{r}$ and stores them outside of SGX memory. At each layer, they retrieve the necessary unblinding factors into SGX, then decrypt them before using them. This approach is required since their approach is required to compute the unblinding factors. \name does not need such a pre-computation. Instead it performs $1/K$ additional computations to decode the. Hence, \name performs additional computations on a GPU but reduces the SGX memory pressure. When providing the additional integrity checks, DarKnight(3) provides about $13X$ speedup over baseline and $1.45X$ speedup over Slalom. 

\textbf{Effect of Virtual Batch Size}: Recall that virtual batch size is the number of images that are linearly combined in the Equation~\eqref{eq:inference_blinding}. Fig.~\ref{fig:2}(b) quantifies the effect of batch size on the inference time. In the figure, DarKnight($K$) is used to denote a virtual batch size of $K$. For the same number of input data points with different batch sizes, we issue inference requests and divided the total inference time across four categories of operations: decoding, encoding, Relu, and Maxpooling operations. We used DarKnight(1) as the baseline. It represents the case where a single image is combined with a uniform random noise $r$ to create two equations using Equation~\eqref{eq:inference_blinding}. As the virtual batch size increases the total speedup improved as long as the virtual batch size fits within SGX memory limits. As the virtual batch size exceeds 4, the execution time gets worse due to SGX memory overflow in our current experimental setup. But as the SGX memory limitation is relaxed in future we believe large virtual batches can be processed to further improve \name's performance.

\vspace{-2mm}
\section{Conclusion}
\label{sec:con}
This work proposes DarKnight, a privacy and integrity preserving framework for DNNs' training and inference. DarKnight uses a hybrid execution model where TEE provide data obfuscation and untrusted GPUs provide computation acceleration. We design a data encoding in TEE for linear operations of DNN training and we provide a rigorous data privacy guarantee. DarKnight can also provide computation integrity and it is robust even in the presence of a malicious colluding GPUs. We evaluated different models and datasets and observe an average of $6.5X$ training speedup and $12.5X$ inference speedup without accuracy degradation over the baseline fully implemented inside TEE. 
\begin{acks}
We would like to express our special gratitude to Hsien-Hsin Sean Lee, Edward Suh, Wenjie Xiong, Chuan Guo and Mark Tygert for sharing their wisdom with us. We are immensely grateful to Krishna Giri Narra and Caroline Tripple for their valuable feedbacks on this project. We sincerely thank all the reviewers for their time and constructive comments. This material is based upon work supported by Defense Advanced Research Projects Agency (DARPA) under Contract Nos. HR001117C0053, HR001120C0088, NSF award number 5345039074, and Facebook AI Research Award Numbers 2215031173 and 2215031183. The views, opinions, and/or findings expressed are those of the author(s) and should not be interpreted as representing the official views or policies of the Department of Defense or the U.S. Government. 
\end{acks}
\newpage
\bibliographystyle{ACM-Reference-Format}
\bibliography{refs}

\appendix

\end{document}